\newcommand{\bfi}{\bfseries\itshape}
\def\thefigure{\thesection.\@arabic\c@figure}
\def\fps@figure{h, t}
\def\thetable{\thesection.\@arabic\c@table}
\def\fps@table{h, t}
\newcommand{\bPi}{\boldsymbol{\Pi}}
\newcommand{\bOm}{\boldsymbol{\Omega}}
\newcommand{\bGam}{\boldsymbol{\Gamma}}
\newcommand{\bM}{\boldsymbol{M}}
\newcommand{\bN}{\boldsymbol{N}}
\newcommand{\bW}{\boldsymbol{\mathcal W}}
\newcommand{\W}{\boldsymbol{\mathcal W}}
\newcommand{\M}{\boldsymbol{\mathcal M}}
\newcommand{\V}{\boldsymbol{\mathcal V}}
\newcommand{\EP}{Euler--Poincar\'e }
\newcommand{\dede}[2]{\frac{\delta #1}{\delta #2}}
\newcommand{\prt}{\partial}
\newcommand{\ad}{\mbox{ad}}
\begin{document}

\newtheorem{theorem}{Theorem}[section]
\newtheorem{definition}[theorem]{Definition}
\newtheorem{lemma}[theorem]{Lemma}
\newtheorem{remark}[theorem]{Remark}
\newtheorem{proposition}[theorem]{Proposition}
\newtheorem{corollary}[theorem]{Corollary}
\newtheorem{example}[theorem]{Example}

\setlength\parindent{0pt}



\title{Matrix G-Strands}

\author{
Darryl D. Holm$^{1}$ and Rossen I. Ivanov$^{2}$ }
{\addtocounter{footnote}{1} \footnotetext{Department of
Mathematics, Imperial College, London SW7 2AZ, UK. Email:
\texttt{d.holm@imperial.ac.uk} } {\addtocounter{footnote}{1}
\footnotetext{ Department of Mathematical Sciences, Dublin
Institute of Technology, Kevin Street, Dublin 8, Ireland. Email:
\texttt{rossen.ivanov@dit.ie}
}%
%
\date{9 Feb 2014, clean copy}

\maketitle

\makeatother
\maketitle


%

\begin{framed}
\begin{abstract}
We discuss three examples in which one may extend integrable Euler--Poincar\'e ODEs to integrable Euler--Poincar\'e PDEs in the matrix G-Strand context. After describing matrix G-Strand examples for $SO(3)$ and $SO(4)$ we turn our attention to $SE(3)$ where the matrix G-Strand equations recover the exact rod theory in the convective representation. We then find a zero curvature representation (ZCR) of these equations and establish the conditions under which they are completely integrable. Thus, the G-Strand equations turn out to be a rich source of integrable systems. The treatment is meant to be expository and most concepts are explained in examples in the language of vectors in $\mathbb{R}^3$.
\end{abstract}
\end{framed}

\vspace{10 mm}

\tableofcontents

\section{Introduction and plan of the paper}\label{intro-sec}

This paper is about extending integrable Euler--Poincar\'e ODEs to integrable Euler--Poincar\'e PDEs in the matrix G-Strand context. The \EP theorem is the result of Hamilton's principle when reduction by symmetry is applied to  Lagrangians that are invariant under a Lie group. Poincar\'e introduced it in the context the rotating rigid body and the heavy top rotating rigidly under the force of gravity \cite{Po1901}, and his ideas were extended to continua in \cite{Ar1966,HoMaRa1998}.

To state the Euler--Poincar\'e equations, let $\mathfrak{g}$ be
a given Lie algebra and let $ \ell : \mathfrak{g} \rightarrow
\mathbb{R} $ be a given function (a La\-gran\-gian), let $\xi$ be an
element of $\mathfrak{g}$.
Then the evolution of the variable $\xi$ is determined by the
simplest form of the Euler--Poincar\'e equations, namely,
\begin{equation}
\frac{d}{dt} \frac{\delta \ell}{\delta \xi} =
\operatorname{ad}_{\xi}^{*}\frac{\delta \ell}{\delta \xi} \,,
\label{BasicEP-eqns}
\end{equation}
where $\delta \ell / \delta \xi \in \mathfrak{g}^\ast$ (the dual
vector space) is the variational derivative of $\ell$ with respect
to $\xi\in \mathfrak{g}$. The map $\operatorname{ad}_{\xi} :
\mathfrak{g} \to \mathfrak{g}$ is the linear map $\eta \mapsto
[\xi, \eta]$, where $[\xi, \eta ]$ denotes the Lie bracket of
$\xi$ and $\eta$, and $\operatorname{ad}_{\xi}^{*}:
\mathfrak{g}^{*} \to \mathfrak{g}^{*}$ is its dual (transpose) as
a linear map. These are the {\bfi basic Euler--Poincar\'e
equations}, and they are valid for either finite or infinite
dimensional Lie algebras.

We are interested in \emph{integrable} Euler--Poincar\'e equations, i.e., those that admit enough constants of motion to describe the solution as motion along their intersections.

\paragraph{Zero curvature representation (ZCR)}
Soon after the the KdV equation was written in commutator form by Lax \cite{Lax68}, people realized that the Lax-pair representation of integrable systems is equivalent to a zero curvature representation (a zero commutator of two operators). For example, Zakharov and Manakov \cite{ZaMa1973} developed the inverse scattering solution for the $3$-wave equation which has a $3 \times 3$ matrix ZCR representation, where both operators are linear in $\lambda$. Manakov used a particular $x$-independent case of the same ZCR \cite{Ma1976} to find the integrability conditions for the ordinary differential equations (ODEs) describing the motion of a rigid body in $n$ dimensions. Lax operators that are polynomial (quadratic) in $\lambda$  were first used in studies of the Thirring model in \cite{KuMi77} and DNLS equation in \cite{KaNe78}. The Inverse Scattering Method for Lax operators that are quadratic in $\lambda$ were developed in \cite{GeIvKu80} and recently a systematic Riemann-Hilbert formulation of the inverse-scattering problem for ZCR operators that are polynomial in $\lambda$ was developed in \cite{Ge12}.
%
%
%

Our starting point is the zero curvature representation (ZCR) of the Lax pair,
\begin{equation}
\partial_t L - \partial_s M -  [L,M] =0
\label{ZCR-eqn}
\,,\end{equation}
where $L$ and $M$ are matrix functions of the independent space-time variables $(s,t)$ that are quadratic in a constant spectral parameter $\lambda$,
\begin{equation}
L:=\lambda^2 A + \lambda \omega + \Gamma
\quad  \hbox{and} \quad
M:=\lambda^2 B + \lambda \gamma + \Omega
\,,
\label{LM-ops}
\end{equation}
and $(A,B)$ are nondegenerate constant mutually commuting matrices. Finding the ZCR is a sufficient condition for a space-time system of partial differential equations (PDE) to be  a completely integrable Hamiltonian system. An extensive development of the ZCR theory of integrable PDEs in $1+1$ dimensional space-time now exists. For example, the ZCR theory of integrable Hamiltonian equations extends, for example, to systems of hydrodynamic type \cite{DuNo1989}. See \cite{DuKrNo1985,ZaMaNoPi1984} for additional historical references and discussions of how to analytically obtain the coherent \emph{soliton} solutions of such equations.

The introduction of $1+1$ dimensional space-time dependence into the \EP equation \eqref{BasicEP-eqns} leads to the {\bfi G-Strand equations} \cite{Ho2011GM2,HoIvPe2012}. We shall show here that the G-Strand systems include integrable ZCR systems. The aim of the present paper is to extend the ideas of Poincar\'e, Arnold, Marsden and others to derive sufficient conditions for a G-Strand system to admit a ZCR, and to discuss a few good examples that illustrate the ideas. These examples will first extend the $SO(3)$ and $SO(4)$ rigid body ODE systems treated by Manakov in \cite{Ma1976} to $1+1$ PDE systems. Then we shall follow Poincar\'e's treatment of the heavy top \cite{Po1901} in treating G-Strands on $SE(3)$. The PDEs for the $SE(3)$ G-Strands turn out to recover the Simo-Marsden-Krishnaprasad (SMK) equations of exact rod theory in the convective representation of the dynamics of a flexible filament \cite{SiMaKr1988,EGBHPR2010}. The integrability conditions for the SMK equations are then discovered by writing their $SE(3)$ G-Strand form as a  ZCR, and solving the constraint relations that result.

\paragraph{Plan of the paper.}$\,$\\
The remainder of Section \ref{intro-sec} defines G-Strand equations and determines sufficient conditions on the group $G$ for which the G-Strand equations admit a zero curvature representation (ZCR).

Section \ref{spin-chains} treats the example of $SO(3)$ G-Strands and derives the conditions for which the motion of spin chains and $SO(3)$ chiral models admits a ZCR.

Section \ref{SO4-sec} extends the treatment of $SO(3)$ G-Strands in Section \ref{spin-chains} to the corresponding results for $SO(4)$ G-Strands, which contain the equations of classical $SO(4)$ chiral models.

Section \ref{SE3-sec} extends the treatment further by treating $SE(3)$ G-Strands, which contain the SMK equations of exact rod theory.

Section \ref{Conclus-sec} summarises and suggests future directions based on the present work.

In summary, the theory is outlined in the rest of Section \ref{intro-sec} and the remainder of the paper addresses examples.
The treatment of the examples is meant to be expository and most concepts are explained in the language of vectors in $\mathbb{R}^3$.

\subsection{Definitions}

\begin{enumerate}[(a)]
\item
Defining G-Strands

 \begin{definition}\rm
 A {\bf G-Strand} is a map $(s,t)\in\mathbb{R}\times\mathbb{R}$ into a Lie group $G$, $g(t,{s}):\,\mathbb{R}\times\mathbb{R}\to G$, whose dynamics in
$(s,t)$ may be obtained from
Hamilton's principle for a $G$-invariant reduced Lagrangian $\ell:
\mathfrak{g}\times\mathfrak{g}\to\mathbb{R}$, where $\mathfrak{g}$
is the Lie algebra of the group $G$.
The G-Strand system of hyperbolic partial differential equations for a $G$-invariant reduced Lagrangian consists of the Euler--Poincar\'e (EP) variational equations and an auxiliary compatibility equation.

\end{definition}

Subclasses of the G-Strand maps contain the principal chiral models of field theory in theoretical physics, reviewed, e.g., in  \cite{Wi1984,ZaMi1980}. An interpretation of the G-Strand equations as the dynamics of a continuous strand of oriented frames (or spins) is given in \cite{Ho2011GM2}.
This is the origin of the term, `strand'. The corresponding theory of molecular strands or filaments in three dimensions is discussed in  \cite{EGBHPR2010}. Recently, a covariant field theory of G-Strands in higher spatial dimensions (G-Branes) has also been developed \cite{FGB2012}.

\item
The G-Strand PDE system for variables $(\Omega,\Gamma)\in\mathfrak{g}\times\mathfrak{g}$ and Lagrangian $\ell(\Omega,\Gamma)$ is given by \cite{Ho2011GM2,HoIvPe2012}
\begin{align}
\begin{split}
{\partial_t} \frac{\delta \ell}{\delta \Omega}
&= {\rm ad}^*_\Omega\, \frac{\delta \ell}{\delta \Omega}
- \partial_s \frac{\delta \ell}{\delta \Gamma}
+ {\rm ad}^*_\Gamma\,\frac{\delta \ell}{\delta \Gamma}
\,,\\
\partial_t \Gamma
&= \partial_s\Omega
-  {\rm ad}_\Omega\,\Gamma
\,.
\end{split}
\label{GSeqns-EP+Compat}
\end{align}
Here ${\rm ad}_\Omega=[\Omega,\,\cdot\,]$ is the adjoint (ad) operation, which for matrix Lie algebras is given by the matrix commutator, and ${\rm ad}^*$ is its dual under a given pairing, such as the trace pairing.

The G-Strand equations on $\mathfrak{g}^*\times \mathfrak{g}$ shown in \eqref{GSeqns-EP+Compat} consist of a set of $n$ \EP equations in variables $(\Pi_1, \Pi_2,\dots,\Pi_n)\in \mathfrak{g}^*$, with $\Pi_k=\delta \ell/\delta \Omega_k$, and $n$ corresponding compatibility equations in variables $(\Omega_1, \Omega_2,\dots,\Omega_n)\in \mathfrak{g}$ that both involve only linear wave operators and quadratic nonlinear terms.

\begin{remark}\rm
The symbols $\Omega$ and $\Gamma$ appear in various guises throughout this paper. This is unavoidable because these symbols are standard for rotating bodies. However, this should cause no confusion, because the meaning will always be clear from the context, and the meaning of the symbols will be self-consistent within any particular section of the paper.
\end{remark}

\begin{remark}\rm
The G-Strand equations \eqref{GSeqns-EP+Compat} form a $2n\times2n$ symmetric hyperbolic system with constant characteristic speeds $c$ given by $c^2=1$. This property is a necessary condition for the matrix G-Strand systems to admit a ZCR.
\end{remark}

This paper aims to derive sufficient conditions for a G-Strand system to admit a ZCR, and to discuss a few classical examples that illustrate the ideas.

\item
Our considerations below will employ bi-invariant pairings on Lie algebras, so we will begin by reviewing a few of the relevant definitions and properties of such pairings. For more information about this topic, see \cite{Va1984}.
\begin{definition}\rm
A nondegenerate symmetric pairing $\langle\,\cdot\,,\,\cdot\,\rangle: \mathfrak{g}\times \mathfrak{g}\to\mathbb{R}$ is said to be \emph{bi-invariant} if it satisfies the ``associativity'' relation, for all $\eta,\,\xi,\,\zeta\in \mathfrak{g}$,
\begin{align}
\langle\,\eta\,,\,[\xi,\, \zeta]\,\rangle
=
\langle\,[\eta,\,\xi]\,,\,\zeta\,\rangle
\,,
\quad\hbox{or}\quad
\langle\,\eta\,,\,{\rm ad}_\xi \zeta\,\rangle
= \langle\,{\rm ad}_\eta\xi\,,\,\zeta\,\rangle
\,.
\label{bi-invar-def}
\end{align}
\end{definition}

In the following, we denote with musical symbols the maps $\flat:\mathfrak{g}\to\mathfrak{g}^*$ and $\sharp:\mathfrak{g}^*\to\mathfrak{g}$.

\begin{proposition}\label{bi-invar-dagger-ad}\rm
The condition
\begin{align}
{\rm ad}^\dagger_\xi\eta:=({\rm ad}^*_\xi\eta^\flat)^\sharp =-\,{\rm ad}_\xi\eta
\label{bi-invar-prop}
\end{align}
holds for Lie algebras with bi-invariant pairings.
\end{proposition}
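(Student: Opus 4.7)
The plan is to unpack the definitions and apply the bi-invariance identity \eqref{bi-invar-def} together with the antisymmetry of the Lie bracket. The identity to be shown is an equality of elements of $\mathfrak{g}$, so after applying $\flat$ on both sides it becomes an equality of linear functionals on $\mathfrak{g}$, which can be checked by pairing with an arbitrary $\zeta\in\mathfrak{g}$.

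First I would recall that, given a nondegenerate pairing, the musical isomorphisms are defined by $\eta^\flat(\zeta)=\langle\eta,\zeta\rangle$ for all $\zeta\in\mathfrak{g}$, with $\sharp=\flat^{-1}$. Next I would use the defining property of the coadjoint map, namely $(\mathrm{ad}^*_\xi\mu)(\zeta)=\mu(\mathrm{ad}_\xi\zeta)$ for $\mu\in\mathfrak{g}^*$. Applying this with $\mu=\eta^\flat$ gives
\[
(\mathrm{ad}^*_\xi \eta^\flat)(\zeta)
=\eta^\flat(\mathrm{ad}_\xi\zeta)
=\langle\eta,[\xi,\zeta]\rangle.
\]

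Then I would apply the bi-invariance relation \eqref{bi-invar-def} to rewrite $\langle\eta,[\xi,\zeta]\rangle=\langle[\eta,\xi],\zeta\rangle$, and use antisymmetry of the bracket, $[\eta,\xi]=-[\xi,\eta]=-\mathrm{ad}_\xi\eta$, to obtain
\[
(\mathrm{ad}^*_\xi \eta^\flat)(\zeta)
= -\langle\mathrm{ad}_\xi\eta,\zeta\rangle
= (-\mathrm{ad}_\xi\eta)^\flat(\zeta).
\]
Since $\zeta\in\mathfrak{g}$ is arbitrary and the pairing is nondegenerate, this implies the equality of covectors $\mathrm{ad}^*_\xi \eta^\flat=(-\mathrm{ad}_\xi\eta)^\flat$. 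Finally I would apply $\sharp$ to both sides to conclude $\mathrm{ad}^\dagger_\xi\eta=-\mathrm{ad}_\xi\eta$.

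There is no real obstacle here; the proof is essentially a one-line chain of equalities. The only subtlety worth flagging is being careful about the order of the arguments in the bracket when invoking \eqref{bi-invar-def}, since the sign comes entirely from the antisymmetry step $[\eta,\xi]=-\mathrm{ad}_\xi\eta$. Nondegeneracy of $\langle\cdot,\cdot\rangle$ is used implicitly in defining $\sharp$ and in passing from equality of pairings to equality of elements.
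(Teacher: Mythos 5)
Your proof is correct and follows essentially the same chain of equalities as the paper's: apply the definition of $\mathrm{ad}^*$, then bi-invariance \eqref{bi-invar-def}, then antisymmetry of the bracket, and conclude by nondegeneracy. The only cosmetic difference is that you keep the $\flat$/$\sharp$ maps explicit throughout, whereas the paper suppresses them after identifying $\mathfrak{g}^*\simeq\mathfrak{g}$.
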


\begin{proof}
Bi-invariance of the pairing allows us to set $\mathfrak{g}^*\simeq\mathfrak{g}$ and thereby suppress the $\flat$ and $\sharp$ notation in computing
\begin{align}
\begin{split}
\langle\,\eta\,,\,{\rm ad}_\xi \zeta\,\rangle
&=\langle\,{\rm ad}^*_\xi\eta\,,\,\zeta\,\rangle
\\&=\langle\,{\rm ad}^\dagger_\xi\eta\,,\,\zeta\,\rangle
\\
\langle\,\eta\,,\,{\rm ad}_\xi \zeta\,\rangle
&= \langle\,{\rm ad}_\eta\xi\,,\,\zeta\,\rangle
\\&= \langle\,-\,{\rm ad}_\xi\eta\,,\,\zeta\,\rangle
\quad\hbox{for all}\quad
\eta,\,\xi,\,\zeta\in \mathfrak{g}\,.
\end{split}
\label{proof-adstar}
\end{align}
The first line is the definition of the ${\rm ad}^*$ operation. The second line uses  the definition ${\rm ad}^\dagger_\xi\eta:=({\rm ad}^*_\xi\eta^\flat)^\sharp$, which becomes simply ${\rm ad}^\dagger_\xi\eta={\rm ad}^*_\xi\eta$ for a bi-invariant pairing. The third line repeats the definition of bi-invariance in  \eqref{bi-invar-def}, in preparation for the conclusion, which follows from antisymmetry of the ${\rm ad}$ operation. Hence,
\begin{align}
{\rm ad}^\dagger_\xi\eta={\rm ad}^*_\xi\eta=-\,{\rm ad}_\xi\eta
\label{addagger-adstar}
\end{align}
for Lie algebras with bi-invariant pairings.
\end{proof}

\begin{proposition}\rm
The condition ${\rm ad}^\dagger_\xi\eta=-\,{\rm ad}_\xi\eta$ holds on semi-simple matrix Lie groups.
\end{proposition}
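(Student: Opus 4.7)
The plan is to reduce this claim to the previous proposition by exhibiting a nondegenerate symmetric bi-invariant pairing on any semi-simple matrix Lie algebra; then Proposition \ref{bi-invar-dagger-ad} applies directly and yields the desired identity ${\rm ad}^\dagger_\xi\eta=-\,{\rm ad}_\xi\eta$.

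The natural candidate is the Killing form
\begin{equation*}
K(\xi,\eta) := {\rm tr}\bigl({\rm ad}_\xi\, {\rm ad}_\eta\bigr),
\end{equation*}
which is manifestly symmetric from the cyclic property of the trace. I would then verify bi-invariance in the sense of \eqref{bi-invar-def}, i.e.\ $K(\eta,[\xi,\zeta]) = K([\eta,\xi],\zeta)$. This is a short computation using ${\rm ad}_{[\xi,\zeta]} = [{\rm ad}_\xi,{\rm ad}_\zeta]$ (the fact that ${\rm ad}$ is a Lie algebra homomorphism, which is the Jacobi identity) together with cyclicity of the trace. For matrix Lie algebras one may alternatively (or additionally) use the simpler trace pairing $\langle X,Y\rangle:={\rm tr}(XY)$ in the defining representation, whose bi-invariance is immediate from ${\rm tr}(X[Y,Z]) = {\rm tr}([X,Y]Z)$; this is the form most convenient for the $SO(n)$ and $SE(3)$ examples that follow later in the paper.

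The one nontrivial ingredient is nondegeneracy of the Killing form, and this is precisely where semi-simplicity enters: by Cartan's criterion, a (real or complex) Lie algebra $\mathfrak{g}$ is semi-simple if and only if its Killing form is nondegenerate. I would invoke this classical theorem (with a reference such as \cite{Va1984}) rather than reprove it. This step is the main obstacle in the sense that everything else is a one-line verification, whereas nondegeneracy genuinely requires the structure theory of semi-simple Lie algebras.

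Having established that $K$ (or, in the matrix setting, the trace form) is a nondegenerate symmetric bi-invariant pairing, Proposition \ref{bi-invar-dagger-ad} applies verbatim: under the musical isomorphism $\flat,\sharp$ induced by this pairing, ${\rm ad}^\dagger_\xi\eta = {\rm ad}^*_\xi\eta = -\,{\rm ad}_\xi\eta$, which is exactly the assertion of the proposition.
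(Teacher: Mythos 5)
Your proposal is correct and follows essentially the same route as the paper: exhibit the Killing form (the paper uses its negative, an immaterial sign difference) as a nondegenerate symmetric bi-invariant pairing on a semi-simple Lie algebra and then invoke Proposition \ref{bi-invar-dagger-ad}. You simply spell out the bi-invariance computation and the appeal to Cartan's criterion, which the paper leaves implicit.
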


\begin{proof}
Semi-simple matrix Lie groups have a bi-invariant pairing given by the negative of their corresponding Killing forms. Consequently, Proposition \ref{bi-invar-dagger-ad} implies the result.
\end{proof}

The following Theorem provides a sufficient condition for the \EP equation in the G-Strand system to be written solely in terms of adjoint (ad) operations, which in this case are matrix commutators.
\begin{theorem}
A sufficient condition for the \EP equation in the G-Strand system \eqref{GSeqns-EP+Compat} to be written solely in terms of ad operations is that
\begin{equation}
{\rm ad}_\xi^\dagger\eta =-\, {\rm ad}_\xi\eta
\label{ad-adstar}
\,,\end{equation}
which holds for any Lie algebra with a bi-invariant pairing.
\end{theorem}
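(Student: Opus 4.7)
The plan is to push the entire \EP equation from $\mathfrak{g}^*$ onto $\mathfrak{g}$ via the musical isomorphisms $\flat,\sharp$ supplied by a bi-invariant pairing, and then use Proposition \ref{bi-invar-dagger-ad} to convert each coadjoint action into an adjoint action (up to a sign). Because $\mathrm{ad}$ on a matrix Lie algebra is just the commutator, this is exactly what is meant by ``written solely in terms of ad operations.''

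Concretely, I would start from the first of \eqref{GSeqns-EP+Compat},
\begin{equation*}
\partial_t \frac{\delta \ell}{\delta \Omega}
= \mathrm{ad}^*_\Omega\,\frac{\delta \ell}{\delta \Omega}
- \partial_s \frac{\delta \ell}{\delta \Gamma}
+ \mathrm{ad}^*_\Gamma\,\frac{\delta \ell}{\delta \Gamma},
\end{equation*}
and introduce the $\mathfrak{g}$-valued quantities
$\Pi := (\delta \ell/\delta \Omega)^\sharp$ and
$\Lambda := (\delta \ell/\delta \Gamma)^\sharp$. Applying $\sharp$ termwise and using that $\sharp$ commutes with the $t$- and $s$-derivatives (since the pairing is constant in $(s,t)$), the equation becomes
\begin{equation*}
\partial_t \Pi
= \mathrm{ad}^\dagger_\Omega \Pi
- \partial_s \Lambda
+ \mathrm{ad}^\dagger_\Gamma \Lambda,
\end{equation*}
by the very definition $\mathrm{ad}^\dagger_\xi \eta := (\mathrm{ad}^*_\xi \eta^\flat)^\sharp$ recalled in Proposition \ref{bi-invar-dagger-ad}. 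Invoking the hypothesis \eqref{ad-adstar} now replaces each $\mathrm{ad}^\dagger$ by $-\mathrm{ad}$, yielding
\begin{equation*}
\partial_t \Pi
= -\,[\Omega,\Pi]
- \partial_s \Lambda
-\,[\Gamma,\Lambda],
\end{equation*}
which is expressed entirely through Lie brackets (matrix commutators in the matrix case). Together with the compatibility equation $\partial_t \Gamma = \partial_s \Omega - [\Omega,\Gamma]$, which already is in ad-form, this gives the desired representation.

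The only non-routine step is the legitimacy of the identification $\mathfrak{g}^*\simeq\mathfrak{g}$ throughout and the commutation of $\sharp$ with $\partial_t$, $\partial_s$; both rely on the pairing being nondegenerate, symmetric and independent of $(s,t)$, which is exactly what bi-invariance guarantees. The final appeal to Proposition \ref{bi-invar-dagger-ad} then closes the argument, and the statement about semi-simple matrix Lie groups is immediate since their (negative) Killing form is bi-invariant.
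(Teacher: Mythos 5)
Your proposal is correct and follows essentially the same route as the paper: apply $\sharp$ to the \EP equation, recognise each raised coadjoint term as $\mathrm{ad}^\dagger$, and invoke \eqref{ad-adstar} to replace it with $-\mathrm{ad}$. The only difference is a sign convention (the paper sets $\gamma^\flat:=-\,\delta\ell/\delta\Gamma$, whereas you keep the plus sign), which merely flips the signs of the $\partial_s$ and $\Gamma$-bracket terms without affecting the argument.
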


\begin{proof}
Denote $\frac{\delta \ell}{\delta \Omega}=:\omega^\flat$ and $\frac{\delta \ell}{\delta \Gamma}=:-\,\gamma^\flat$, so the \EP equation in \eqref{GSeqns-EP+Compat} becomes
\begin{align}
\begin{split}
{\partial_t} \omega^\flat
&= {\rm ad}^*_\Omega\, \omega^\flat
+ \partial_s \gamma^\flat
- {\rm ad}^*_\Gamma\,\gamma^\flat
\\
{\partial_t} \omega
&= ({\rm ad}^*_\Omega\, \omega^\flat)^\sharp
+ \partial_s \gamma
- ({\rm ad}^*_\Gamma\,\gamma^\flat)^\sharp
\\
&= {\rm ad}^\dagger_\Omega\, \omega
+ \partial_s \gamma
- {\rm ad}^\dagger_\Gamma\,\gamma
\\
&= -\,{\rm ad}_\Omega\, \omega
+ \partial_s \gamma
+ {\rm ad}_\Gamma\,\gamma
\,.
\end{split}
\label{GSeqns-EP}
\end{align}
which is written solely in terms of ad operations.
\end{proof}

\begin{corollary}
G-Strand systems \eqref{GSeqns-EP+Compat} on semi-simple matrix Lie groups may be written entirely in terms of matrix derivatives and commutators.
\end{corollary}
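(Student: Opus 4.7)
The plan is to chain together the two preceding results with the observation that, for matrix Lie algebras, the ad operation is literally the matrix commutator. First I would invoke the Proposition immediately above: every semi-simple matrix Lie group carries a bi-invariant pairing, namely the negative of its Killing form, so that the identity $\mathrm{ad}_\xi^\dagger \eta = -\,\mathrm{ad}_\xi \eta$ from \eqref{ad-adstar} is available on such groups.

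Next I would apply the Theorem. Under the hypothesis \eqref{ad-adstar}, the Theorem rewrites the \EP half of \eqref{GSeqns-EP+Compat} in the form
\begin{equation*}
\partial_t \omega = -\,\mathrm{ad}_\Omega\, \omega + \partial_s \gamma + \mathrm{ad}_\Gamma\,\gamma,
\end{equation*}
in which all variables live in $\mathfrak{g}$ (the $\flat$/$\sharp$ identifications having been absorbed) and only derivatives and ad operations appear. The compatibility half of \eqref{GSeqns-EP+Compat}, namely $\partial_t \Gamma = \partial_s \Omega - \mathrm{ad}_\Omega \Gamma$, is already written purely in terms of $\partial_t$, $\partial_s$, and ad. So on a semi-simple matrix Lie group both components of the G-Strand system are expressed entirely via ad and partial derivatives.

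Finally I would close the argument by recalling that for any matrix Lie algebra the adjoint action reduces to the matrix commutator, $\mathrm{ad}_X Y = [X,Y] = XY - YX$. Substituting this interpretation into the two equations above produces a system whose only operations are matrix multiplication, matrix subtraction, and the partial derivatives $\partial_s$ and $\partial_t$, which is precisely the claim of the Corollary.

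I do not expect any real obstacle: the Corollary is a direct specialization of the Theorem, and the only substantive content beyond quoting the previous results is the trivial but essential reminder that $\mathrm{ad}$ on a matrix Lie algebra is just the commutator bracket. The one small point to be careful about is the identification $\mathfrak{g}^* \simeq \mathfrak{g}$ used via the Killing pairing, so that $\omega^\flat$ and $\gamma^\flat$ may be replaced by their counterparts $\omega,\gamma \in \mathfrak{g}$ without loss of information; this is exactly the step that the bi-invariance hypothesis of Proposition \ref{bi-invar-dagger-ad} legitimizes.
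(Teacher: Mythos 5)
Your proposal is correct and follows essentially the same route as the paper: the paper's proof likewise cites the rewritten \EP equation \eqref{GSeqns-EP} (obtained from the Theorem under the bi-invariance hypothesis supplied by the Killing form on a semi-simple group), appends the compatibility equation, and replaces each ${\rm ad}$ by the matrix commutator. Your explicit remark about the $\mathfrak{g}^*\simeq\mathfrak{g}$ identification is a helpful clarification but does not change the argument.
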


\begin{proof}
As a consequence of \eqref{GSeqns-EP}, the G-Strand system \eqref{GSeqns-EP+Compat} may be written for semi-simple matrix Lie groups equivalently as
\begin{align}
\begin{split}
{\partial_t} \omega
&= -\,{\rm ad}_\Omega\,\omega
+ \partial_s \gamma
+ {\rm ad}_\Gamma\,\gamma
\\
&= -\,[\Omega,\,\omega]
+ \partial_s \gamma
+ [\Gamma,\,\gamma]
\,,\\
\partial_t \Gamma
&= \partial_s\Omega
-  {\rm ad}_\Omega\,\Gamma
\\
&= \partial_s\Omega
-  [\Omega,\,\Gamma]
\,,
\end{split}
\label{GSeqns-com}
\end{align}
which involves only matrix derivatives and commutators.
\end{proof}

\end{enumerate}

\subsection{Lax pairs for matrix G-Strand systems}

\begin{theorem} \rm [ZCR formulation for matrix G-Strands] \label{ZCR-thm}$\,$\\
The matrix G-Strand system of equations in commutator form \eqref{GSeqns-com}, may be expressed as a zero curvature representation (ZCR, or Lax pair) on the Lie algebra $\mathfrak{g}$ of the Lie group $G$,
\begin{equation}
\partial_t L - \partial_s M -  [L,M] =0
\label{ZCR-eqn}
\,,\end{equation}
with
\begin{equation}
L:=\lambda^2 A + \lambda \omega + \Gamma
\quad  \hbox{and} \quad
M:=\lambda^2 B + \lambda \gamma + \Omega
\,,
\label{LM-ops}
\end{equation}
where the independent constant matrices $A$ and
$B$ commute, ${\rm ad}_AB=[A,B]=0$, and
\begin{equation}
{\rm ad}_A \gamma =  {\rm ad}_B\, \omega \quad  \hbox{and} \quad
{\rm ad}_A \Omega =  {\rm ad}_B \Gamma - {\rm ad}_{\omega}\gamma
\,. \label{LM-ops}
\end{equation}

\end{theorem}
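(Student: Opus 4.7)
The proof plan is essentially a direct substitution: plug the ansatz \eqref{LM-ops} into the ZCR equation \eqref{ZCR-eqn} and collect coefficients of powers of the spectral parameter $\lambda$. Since $A$, $B$ are constant and $\omega$, $\gamma$, $\Gamma$, $\Omega$ depend only on $(s,t)$, the only differentiated terms are $\partial_t L = \lambda\,\partial_t\omega + \partial_t\Gamma$ and $\partial_s M = \lambda\,\partial_s\gamma + \partial_s\Omega$. Expanding the commutator $[L,M]$ produces ten terms graded by powers $\lambda^0,\ldots,\lambda^4$, which I would group carefully, using antisymmetry of the bracket to combine cross terms like $[A,\gamma]$ and $[\omega,B]$.

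Reading off each power of $\lambda$ in turn: the $\lambda^4$ coefficient is simply $-[A,B]$, which vanishes by the assumed commutativity of $A$ and $B$. The $\lambda^3$ coefficient collapses to $-\bigl([A,\gamma]-[B,\omega]\bigr)$, which vanishes by the first constraint in \eqref{LM-ops}. The $\lambda^2$ coefficient is $-\bigl([A,\Omega]-[B,\Gamma]+[\omega,\gamma]\bigr)$, which vanishes by the second constraint. So the three highest-order coefficients contribute nothing new; they are precisely the reason the stated algebraic conditions on $A,B,\omega,\gamma,\Gamma,\Omega$ are imposed.

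What remains are the coefficients of $\lambda^1$ and $\lambda^0$, which must reproduce the two equations of the system \eqref{GSeqns-com}. The $\lambda^1$ coefficient reads
\begin{equation*}
\partial_t\omega - \partial_s\gamma - [\omega,\Omega] - [\Gamma,\gamma] = 0,
\end{equation*}
and after moving the brackets to the right and using $[\omega,\Omega]=-[\Omega,\omega]$ this is exactly the first equation of \eqref{GSeqns-com}. The $\lambda^0$ coefficient is
\begin{equation*}
\partial_t\Gamma - \partial_s\Omega - [\Gamma,\Omega] = 0,
\end{equation*}
which is the compatibility equation in \eqref{GSeqns-com} after writing $[\Gamma,\Omega]=-[\Omega,\Gamma]$. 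Conversely, given a solution of \eqref{GSeqns-com} together with the algebraic constraints, the same calculation run backwards shows that $\partial_t L - \partial_s M - [L,M]$ vanishes identically in $\lambda$.

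There is no real obstacle here; the only point that requires any care is the bookkeeping of signs when reorganizing the nine nontrivial terms in $[L,M]$, and the recognition that the three conditions listed in \eqref{LM-ops} are precisely the algebraic obstructions that must be cleared in order for the $\lambda^{\ge 2}$ coefficients to vanish and for the residual $\lambda^1$ and $\lambda^0$ coefficients to coincide with the matrix G-Strand equations in commutator form.
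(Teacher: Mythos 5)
Your proposal is correct and follows essentially the same route as the paper: direct substitution of the quadratic-in-$\lambda$ ansatz into the ZCR, grading by powers of $\lambda$, identifying the $\lambda^4,\lambda^3,\lambda^2$ coefficients with the assumed algebraic constraints and the $\lambda^1,\lambda^0$ coefficients with the Euler--Poincar\'e and compatibility equations of \eqref{GSeqns-com}. The only difference is one of emphasis: the paper additionally interprets the $\lambda^3$ and $\lambda^2$ relations as determining the diagnostic variables $(\gamma,\Omega)$ from the prognostic ones $(\omega,\Gamma)$, but this does not change the logical content of the verification.
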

\begin{proof}
Inserting the definitions for L and M, then equating coefficients to zero at each power of $\lambda$ yields:
\begin{align}
\begin{split}
\lambda^4 &: [A,B]=0 \\
\lambda^3 &: [A,\gamma]-[B,\omega]=0 \\
\lambda^2 &: [A,\Omega]-[B,\Gamma]+[\omega,\gamma]=0 \\
\lambda^1 &: -\, \left[\Omega,\omega\right] + \left[\Gamma,\gamma\right] = \partial_t \omega -\partial_s \gamma \\
\lambda^0 &:  \left[\Omega,\Gamma\right] = \partial_t \Gamma - \partial_s \Omega
\end{split}
\label{constraint-rels}
\end{align}
We may now solve the relations in \eqref{constraint-rels} and extract the conditions under which the matrix G-Strand equations will possess a ZCR (Lax pair) and thus will be an integrable system.

\begin{itemize}
\item The equations at order $\lambda^0$ and $\lambda^1$ in the
system \eqref{constraint-rels} recover the compatibility condition
and Euler--Poincar\'e equations, respectively. These equations
determine the time evolution of $\omega$ and $\Gamma$, which are
the \emph{prognostic} variables appearing in $L$. In contrast, the
variables $\gamma$ and $\Omega$ appearing in $M$ are
\emph{diagnostic} variables determined
algebraically from the prognostic variables after each time step.
\item The order $\lambda^3$ equation gives a linear relation that
determines the  diagnostic variable $\gamma$ in terms of the
prognostic variable $\omega$, namely,
\begin{align}
{\rm ad}_A\gamma = {\rm ad}_B\, \omega \label{lambda3-ad}
\end{align}
A similar condition $\gamma = {\rm ad}_A^{-1}{\rm ad}_B\, \omega$
occurs in Manakov's formulation of the Lax pair for the rigid
body on $SO(n)$ \cite{Ma1976}, interpreted as a linear relation
between the angular momentum and angular velocity that restricts
the moments of inertia for which the $SO(n)$ rigid body is
integrable. However, in our considerations below
${\rm ad}_A$ is not an invertible operator. This means that elements of the 
kernel of ${\rm ad}_A$ will appear, multiplied by scalars that must be
determined in finding the Hamiltonians for integrable cases of the G-strands.  

\item The order $\lambda^2$ equation gives \emph{another} formula
that determines the \emph{other} diagnostic variable $\Omega$ in
terms of the prognostic variables $\omega$ and $\Gamma$, namely,
\begin{align}
{\rm ad}_A \Omega =  {\rm ad}_B \Gamma - {\rm ad}_{\omega}\gamma.
\label{lambda2-ad}
\end{align}
\end{itemize}
Thus,  the constraint equations \eqref{constraint-rels} may be
solved to express the diagnostic variables ($\gamma,\Omega$)
contained in $M$,  in terms of the prognostic ones
($\omega,\Gamma$) contained in $L$. Consequently, the $G$-Strand
system of equations \eqref{GSeqns-com} has a ZCR (or Lax pair) as
in \eqref{ZCR-eqn}.

%
\end{proof}
Theorem \ref{ZCR-thm} states that under certain conditions the G-Strand equations on semi-simple Lie algebras possess a ZCR, given by
\begin{equation}
\partial_t L - \partial_s M -  [L,M] =0
\label{ZCR-eqn-intro}
\,,\end{equation}
with
\begin{equation*}
L:=\lambda^2 A + \lambda \omega + \Gamma
\quad  \hbox{and} \quad
M:=\lambda^2 B + \lambda \gamma + \Omega
\,,
\end{equation*}
where $A$ and $B$ satisfy ${\rm ad}_AB=0$. The
ZCR, in turn, puts the matrix G-Strands into the realm of
integrable Hamiltonian systems. One can pursue soliton solutions
of the matrix G-Strand equations derived here, for example, by
following the dressing method for ZCRs \cite{ZaMaNoPi1984}.

The remainder of the paper consists of explicit
examples of this procedure that directly extend integrable
Euler--Poincar\'e ODEs to integrable Euler--Poincar\'e PDEs in the
matrix G-Strand context by computing the ZCR's for the three Lie
groups, $SO(3)$, $SO(4)$ and $SE(3)$.

The theory in this section and the examples in the remainder of the paper make it clear that the G-Strand equations are a rich source of integrable systems.

\paragraph{Relation to complex fluids (CF)}
After performing the Legendre transformation of the Lagrangian in the G-Strand equations \eqref{GSeqns-EP+Compat},
\[
h(\Pi, \Gamma) = \langle \,\Pi,\, \Omega\,\rangle - \ell(\Omega,\,\Gamma)
\,,\]
one computes the derivatives of the Hamiltonian $h$ as
\[
\frac{\delta h}{\delta \Pi} = \Omega
\,,\quad
\frac{\delta h}{\delta \Omega}
= 0 =
\Pi - \frac{\delta \ell}{\delta\Omega}
\,,\quad
\frac{\delta h}{\delta \Gamma} = -\,\frac{\delta \ell}{\delta \Gamma}
\,.\]
Inserting these relations into the G-Strand equations \eqref{GSeqns-EP+Compat} yields the following Hamiltonian equations,
\begin{align}
\begin{split}
{\partial_t} \Pi
&= {\rm ad}^*_{\delta h/\delta \Pi}\, \Pi
+\left( \partial_s
- {\rm ad}^*_\Gamma\right)\frac{\delta h}{\delta \Gamma}
\,,\\
\partial_t \Gamma
&= \left( \partial_s + {\rm ad}_\Gamma\right)
\frac{\delta h}{\delta \Pi}
\,.
\end{split}
\label{GSeqns-Ham+Compat}
\end{align}
or, in matrix form,
\begin{equation}
\frac{\partial}{\partial t}
    \begin{bmatrix}
    \Pi
    \\
    \Gamma
    \end{bmatrix}
=
\begin{bmatrix}
  -{\rm ad}^\ast_{\Pi}
   &
   (\partial_s - {\rm ad}^*_\Gamma)
   \\
   (\partial_s + {\rm ad}_\Gamma)
   & 0
    \end{bmatrix}
    \begin{bmatrix}
   \delta h/\delta\Pi \\
   \delta h/\delta\Gamma
    \end{bmatrix}.
 \label{CF-HamMatrix}
\end{equation}


Remarkably, equations \eqref{GSeqns-Ham+Compat} are exactly the
Hamiltonian equations for a static CF whose {\bfi broken symmetry}
is $G$ \cite{Ho2002,GBRa2009}.

Being dual to a Lie algebra, the matrix in equation
\eqref{CF-HamMatrix} defines a {\bfi Lie--Poisson Hamiltonian
matrix}. See, e.g., \cite{MaRa1999, Ho2011GM2} and references
therein for more discussions of such Hamiltonian matrices. For our
present purposes, its emergence in the {CF} context links the
physical and mathematical interpretations of the variables in the
theory of {CF}s with earlier work in the gauge theory approach to
condensed matter, see, e.g., \cite{Kleinert1989}.
These gauge theory aspects may be recognised by noticing that the $\Pi$-$\Gamma$ cross terms are the {\bfi covariant derivatives} with respect to the space-time connection one-form given by $\Omega \,dt + \Gamma ds$, as explained in \cite{HoKu1988}.  The second G-Strand equation in \eqref{GSeqns-Ham+Compat} is then recognized as the zero curvature relation for this connection one-form.

The gauge theory approach to liquid crystal
physics is reviewed in, e.g.,  \cite{Tr1982,Kleinert1989,Kl1983,Kl1989}.
In the present situation, we may identify $\Pi$ as the gauge charge and $\Gamma$ as the vector potential for the gauge theory of condensed matter. The same identifications apply to \emph{chromohydrodynamics}, the dynamics of a Yang-Mills fluid plasma \cite{GiHoKu1982}. These identifications also apply to spin glasses and superfluid $^4He$ and $^3He$-$A$ with rotation and spin \cite{HoKu1982,HoKu1988}. In addition, they apply to the theory of complex fluids, such as liquid crystals. In fact the theories of superfluids and complex fluids are very similar, when examined from this viewpoint \cite{KlMi1978,Mi1980}.

The partial derivatives $\partial_s$ in equation \eqref{CF-HamMatrix} appearing in the $\Pi$-$\Gamma$ cross terms comprise a generalized two-cocycle. This is somewhat exotic for a classical fluid. Finding such a feature in the continuum theory of complex fluids may seem fitting, because in their shared Hamiltonian structures the complex fluids seem to form a bridge between the classical and quantum fluid theories, such as superfluids.
Essentially all of the complex fluid types can be formulated naturally as \EP systems using the methods of \cite{Ho2002,GBRa2009,EGBHPR2010}.
This will be the topic of future research.

\section{$SO(3)$ G-Strands}\label{spin-chains}

One possibility for the construction of an $SO(3)$ integrable
G-Strand system was analyzed in \cite{HoIvPe2012} by linking it to
the integrable $P$-chiral model of \cite{BoYa1995,Ya1988}. The present
section will derive a zero curvature representation leading
to a \emph{new} class of integrable $SO(3)$ G-Strand equations.

\subsection{The hat map $\,\mathbf{\widehat{\,}}\,:\,({so}(3),
[\cdot, \cdot]) \to (\mathbb{R}^3, \times)$}

The Lie algebra $(\mathfrak{so}(3), [\cdot, \cdot])$ with matrix commutator bracket $[\,\cdot\,,\,\cdot\,]$ maps to the Lie algebra $(\mathbb{R}^3, \times) $ with vector product $\times$, by the linear isomorphism
\begin{eqnarray*}
\label{so three isomorphism in coordinates}
\mathbf{u}: =(u^1, u^2, u ^3) \in \mathbb{R}^3 \mapsto
\widehat{u}: =
\left[
\begin{array}{ccc}
0&-\,u^3&u^2\\
u^3&0&-\,u^1\\
-\,u^2&u^1&0
\end{array}
\right] \in {so}(3)
\,.
\end{eqnarray*}
In matrix and vector components, the linear isomorphism is
$
\widehat{u}_{ij}: =-\,\epsilon_{ijk}u^k
\,.
$
Equivalently, this isomorphism is given by
$
\widehat{u} \mathbf{v} = \mathbf{u}\times \mathbf{v} \quad
\text{for all} \quad \mathbf{u}, \mathbf{v}\in \mathbb{R}^3.
$
This is the hat map $\,\mathbf{\widehat{\,}}\,:\,({so}(3),
[\cdot, \cdot]) \to (\mathbb{R}^3, \times)$, which holds for the skew-symmetric $3\times3$ matrices in the matrix Lie algebra $\mathfrak{so}(3)$.

One may verify the following useful formulas for $\mathbf{u}, \mathbf{v}, \mathbf{w} \in \mathbb{R}^3$:
\begin{eqnarray*}
\label{Lie-bracket-relation}
(\mathbf{u} \times \mathbf{v})\widehat{\phantom{u}}
&=&
{\widehat{u}}\,{\widehat{v}} - {\widehat{v}}\,{\widehat{u}}
=:[{\widehat{u}},{\widehat{v}}]
\,,
\nonumber\\
\label{triple-cross-product}
[{\widehat{u}},{\widehat{v}}]\,\mathbf{w}
&=&
(\mathbf{u} \times \mathbf{v}) \times \mathbf{w}
\,,\\
\label{double-cross-product}
\big((\mathbf{u} \times \mathbf{v}) \times \mathbf{w}\big)
\widehat{\phantom{u}}
&=&
\big[[{\widehat{u}},{\widehat{v}}]
\,,\, {\widehat{w}}
\big]
\,,
\\
\mathbf{u}\cdot \mathbf{v}
&=& - \tfrac{1}{2}
\operatorname{trace}({\widehat{u}}\, {\widehat{v}})
=:\big\langle \,{\widehat{u}}
\,,\, {\widehat{v}}\, \big\rangle
\,,
\label{dot-product}
\end{eqnarray*}
in which the dot product of vectors is also the natural pairing of $3\times3$
skew-symmetric matrices.

\subsection{The $SO(3)$ G-Strand system in $\mathbb{R}^3$ vector form}

By using the hat map, $\mathfrak{so}(3)\to\mathbb{R}^3$, the matrix G-Strand system for $SO(3)$ \cite{HoIvPe2012} may be written in $\mathbb{R}^3$ vector form by following the analogy with the Euler rigid body \cite{Ma1976} in standard notation, cf. equations \eqref{GSeqns-com}
\begin{align}
\begin{split}
\partial_t \Pi + \Omega\times\Pi  - \partial_s \Xi - \Gamma\times \Xi &= 0
\,,
\\
\partial_t \Gamma - \partial_s \Omega - \Gamma\times \Omega &= 0
\,,
\end{split}
\label{EPeqn+compat-SO3}
\end{align}
where $\Omega:=O^{-1}\partial_t O\in\mathfrak{so}(3)$ and
$\Pi:=\delta \ell/\delta \Omega\in\mathfrak{so}(3)^*$ are the body
angular velocity and momentum, while
$\Gamma:=O^{-1}\partial_sO\in\mathfrak{so}(3)$ and $\Xi=-\,\delta
\ell/\delta \Gamma\in\mathfrak{so}(3)^*$ are the body angular
strain and stress. These G-Strand equations for
$\mathfrak{g}=\mathfrak{so}(3)$ equations may be expressed in
Lie--Poisson Hamiltonian form in terms of vector operations in
$\mathfrak{se}(3)\simeq\mathfrak{so}(3)\times\mathbb{R}^3$ as,
\begin{equation}
\frac{\partial}{\partial t}
    \begin{bmatrix}
    \Pi
    \\
    \Gamma
    \end{bmatrix}
=
\begin{bmatrix}
  \Pi\times
   &
   \partial_s + \Gamma\times
   \\
    \partial_s + \Gamma\times
   & 0
    \end{bmatrix}
    \begin{bmatrix}
   \delta h/\delta\Pi = \Omega\\
   \delta h/\delta\Gamma =  \Xi
    \end{bmatrix}
    .
    \label{LP-Ham-struct-SO3}
\end{equation}
This Hamiltonian matrix yields a {\bfi Lie--Poisson bracket} defined on the dual of the semidirect-product Lie algebra $\mathfrak{se}(3)\simeq\mathfrak{so}(3)\circledS\,\mathbb{R}^3$ with a two-cocycle given by $\partial_s$. Namely,
\begin{align}
\begin{split}
\{ f,\,h\} &= \int
    \begin{bmatrix}
   \delta f/\delta\Pi\,, &
   \delta f/\delta\Gamma
    \end{bmatrix}
    \cdot
\begin{bmatrix}
  \Pi\times
   &
   \partial_s + \Gamma\times
   \\
    \partial_s + \Gamma\times
   & 0
    \end{bmatrix}
    \begin{bmatrix}
   \delta h/\delta\Pi\\
   \delta h/\delta\Gamma
    \end{bmatrix}
    ds
\\  &\hspace{-1cm}=  \int
-\,\Pi\cdot  \frac{\delta f}{\delta\Pi} \times  \frac{\delta h}{\delta\Pi}
-\Gamma \cdot \left(
\frac{\delta f}{\delta\Pi} \times\frac{\delta h}{\delta\Gamma}
-
\frac{\delta h}{\delta\Pi} \times\frac{\delta f}{\delta\Gamma}
\right)
+
\frac{\delta f}{\delta\Pi} \partial_s\frac{\delta h}{\delta\Gamma}
+
\frac{\delta f}{\delta\Gamma} \partial_s\frac{\delta h}{\delta\Pi}
\    ds
    .
\end{split}
\label{LP-brkt-SO3}
\end{align}

Dual variables are $\Pi$ dual to $\mathfrak{so}(3)$ and $\Gamma$ dual to $\mathbb{R}^3$. For more information about Lie--Poisson brackets, see \cite{MaRa1999}.

The $\mathbb{R}^3$ G-Strand equations \eqref{EPeqn+compat-SO3} combine two classic ODEs due separately to Euler and Kirchhoff into a single PDE system.
The $\mathbb{R}^3$ vector representation of $\mathfrak{so}(3)$ implies that ${\rm ad}^*_\Omega\Pi= -\,\Omega\times\Pi =-{\rm ad}_\Omega\Pi$, so the corresponding Euler--Poincar\'e equation has a ZCR.
To find its integrability conditions, we set
\begin{equation}
L:=\lambda^2 A + \lambda \Pi + \Gamma
\quad  \hbox{and} \quad
M:=\lambda^2 B + \lambda \Xi + \Omega
\,,
\label{LMdef-SO3}
\end{equation}
and compute the conditions in terms of $\Pi$ $\Omega$, $\Xi$, $\Gamma$ and the constant vectors $A$ and $B$ that are required to write the vector system \eqref{EPeqn+compat-SO3} in zero-curvature form,
\begin{equation}
\partial_t L - \partial_s M -  L\times M =0
\label{ZCRvec-eqn}
\,.\end{equation}
By direct substitution of \eqref{LMdef-SO3} into  \eqref{ZCRvec-eqn} and equating the coefficient of each power of $\lambda$ to zero, one finds
\begin{align}
\begin{split}
\lambda^4 &: A\times B=0   \\
\lambda^3 &: A\times \Xi - B\times \Pi = 0 \\
\lambda^2 &: A\times \Omega - B\times \Gamma + \Pi\times\Xi =0  \\
\lambda^1 &:  \Pi\times\Omega + \Gamma\times \Xi
= \partial_t \Pi  - \partial_s \Xi
\quad\hbox{(EP equation)} \\
\lambda^0 &:  \Gamma\times \Omega = \partial_t \Gamma - \partial_s \Omega
\quad\hbox{(compatibility)}
\end{split}
\label{constraint-rels-SO3}
\end{align}
where $A$ and $B$ are taken as constant nonzero vectors.
These imply the following relationships
\begin{align}
\begin{split}
\lambda^4 &: A=\alpha B  \\
\lambda^3 &: A\times(\Xi-\Pi/\alpha)  = 0
\Longrightarrow \Xi - \Pi/\alpha = \beta A \\
\lambda^2 &: A \times(\Omega - \Gamma/\alpha) = \Xi \times\Pi
= \beta A \times\Pi
\\
\end{split}
\label{constraint-rel-imp}
\end{align}

We solve equations \eqref{constraint-rel-imp} for the diagnostic variables $\Xi$ and $\Omega$, as
\begin{align}
\Xi - \frac{1}{\alpha}\Pi = \beta A
\quad\hbox{and}\quad
\Omega - \frac{\Gamma}{\alpha} - \beta\Pi = \gamma A\,,
\label{diagnos-relations}
\end{align}
where $\alpha,\beta,\gamma$ are real scalar functions.
Hence, we have proved,
\begin{theorem} \rm [ZCR formulation for $SO(3)$ G-Strand] \label{ZCR-thm-SO3}$\,$\\
The $SO(3)$ matrix G-Strand system of equations in \eqref{EPeqn+compat-SO3}, may be expressed as a zero curvature representation (ZCR, or Lax pair),
\begin{equation}
\partial_t L - \partial_s M -  L\times M =0
\label{ZCRvec-eqn1}
\,,\end{equation}
with
\begin{equation}
L:=\lambda^2 A + \lambda \Pi + \Gamma
\quad  \hbox{and} \quad
M:=\lambda^2 B + \lambda \Xi + \Omega
\,,
\label{LMdef-SO3a}
\end{equation}

where $A$ and $B$ are constant nonzero vectors, and
the diagnostic variables $\Xi$ and $\Omega$, are given as in \eqref{diagnos-relations}.
\end{theorem}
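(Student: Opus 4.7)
The strategy is to substitute the Ansatz \eqref{LMdef-SO3a} into the vector zero-curvature equation \eqref{ZCRvec-eqn1} and read off the conditions coefficient by coefficient in the spectral parameter $\lambda$. Because $A$ and $B$ are constant and $L$, $M$ are each quadratic in $\lambda$, the expression $\partial_t L - \partial_s M - L\times M$ is a polynomial of degree four in $\lambda$, yielding the five vector equations already displayed in \eqref{constraint-rels-SO3}. The $\lambda^1$ and $\lambda^0$ coefficients reproduce the \EP equation and the compatibility equation of \eqref{EPeqn+compat-SO3} verbatim, so nothing needs to be checked at those orders; the real work is to show that the three algebraic conditions at $\lambda^4,\lambda^3,\lambda^2$ are solved by the diagnostic formulas \eqref{diagnos-relations}.

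I would carry this out in order of decreasing power of $\lambda$. At order $\lambda^4$, the equation $A\times B=0$ together with the hypothesis that $A,B$ are nonzero forces $A\parallel B$, so one may write $A=\alpha B$ with $\alpha\neq 0$ a constant scalar. Substituting this into the $\lambda^3$ equation $A\times \Xi = B\times\Pi$ yields $A\times(\Xi-\Pi/\alpha)=0$, and the kernel argument (see below) gives $\Xi = \Pi/\alpha + \beta A$ for a scalar field $\beta(s,t)$. Feeding this back into the $\lambda^2$ equation $A\times\Omega - B\times\Gamma + \Pi\times\Xi=0$, the term $\Pi\times(\Pi/\alpha)$ vanishes and $\Pi\times(\beta A) = -\,\beta\, A\times\Pi$, while $B\times\Gamma = A\times(\Gamma/\alpha)$; collecting gives $A\times(\Omega-\Gamma/\alpha - \beta\Pi)=0$, and the same kernel argument produces $\Omega = \Gamma/\alpha + \beta\Pi + \gamma A$ for another scalar field $\gamma(s,t)$. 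This is precisely \eqref{diagnos-relations}.

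The main conceptual obstacle, already flagged in the discussion of \eqref{lambda3-ad}, is the non-invertibility of the linear map $\mathbf{v}\mapsto A\times\mathbf{v}$ on $\mathbb{R}^3$: its kernel is the one-dimensional line spanned by $A$ itself. Consequently, each of the two equations $A\times(\Xi-\Pi/\alpha)=0$ and $A\times(\Omega-\Gamma/\alpha-\beta\Pi)=0$ determines its unknown only up to an additive multiple of $A$, which is exactly the source of the two free scalar functions $\beta$ and $\gamma$ in the diagnostic relations. No step of the proof requires pinning these scalars down; they parametrize a family of admissible choices and will only be constrained later when one selects a Lagrangian and asks which choices yield a genuine ZCR for an integrable \EP PDE. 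With the diagnostic relations \eqref{diagnos-relations} in hand, all five constraints in \eqref{constraint-rels-SO3} are satisfied provided $(\Pi,\Gamma)$ evolve by the $SO(3)$ G-Strand system \eqref{EPeqn+compat-SO3}, which establishes the theorem.
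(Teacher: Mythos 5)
Your proposal is correct and follows essentially the same route as the paper: substitute the quadratic Ansatz into the vector ZCR, read off the five coefficient equations \eqref{constraint-rels-SO3}, note that the $\lambda^1$ and $\lambda^0$ orders reproduce \eqref{EPeqn+compat-SO3}, and solve the $\lambda^4,\lambda^3,\lambda^2$ constraints to obtain \eqref{diagnos-relations}, with the free scalars $\beta,\gamma$ arising from the one-dimensional kernel of $A\times(\cdot)$ exactly as in \eqref{constraint-rel-imp}. Your explicit emphasis on the non-invertibility of $A\times(\cdot)$ merely makes precise what the paper leaves implicit, so there is nothing to add.
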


\begin{remark}\rm
The Lax pair in \eqref{LMdef-SO3a} is quadratic in $\lambda$. The Lax pair formulation of the heavy top in \cite{Ho2011GM2} is also quadratic in $\lambda$.
\end{remark}

\subsection{Conserved quantities for the SO(3) $G$-strand}

In general, the Lax pair representation  of an integrable system determines its conserved quantities (integrals of motion). This section
presents the computation of the conserved quantities for
the SO(3) $G$-strand. In the $\mathfrak{so}(3)$ basis of Pauli matrices, $\sigma^k$ with $k=1,2,3$, the $L$-operator in \eqref{LMdef-SO3a} takes the following form
\[L = \left(\begin{array}{cc}
L_3 & L_1-iL_2 \\ L_1+iL_2 &  -L_3
\end{array}\right)
=
\sum_{k=1}^3L_k\sigma_k,\]
in which $L_k=\lambda^2 A_k+\lambda \Pi_k +
\Gamma _k $, with $k=1,2,3$.  The $M$-operator takes a similar form in this basis. The notation may be shortened further by introducing $L_{12}:=
L_1-iL_2 = L_{21}^*$. The Lax representation is the compatibility relation for two
linear problems,

\begin{equation}
\psi_s+L\psi=0, \qquad  \psi_t+M\psi=0,\label{Lax pair}
\end{equation} where $\psi=(\psi_1,\psi_2)^T$ is a two-component
vector. The first equation in (\ref{Lax pair}) can be rewritten equivalently
in terms of the first component $\psi_1$ only, by eliminating $\psi_2$ to find
\begin{equation}
\psi_{1,ss}= \left(L_1^2+L_2^2+L_3^2-L_{3,s}+L_3
\frac{L_{12,s}}{L_{12}}\right)\psi_1+\frac{L_{12,s}}{L_{12}}\psi_{1,s}\label{Lax
pair2}
\end{equation}

Compatibility of the density $\tilde{\rho}:=\frac{\psi_{1,s}}{\psi_1}$ and  flux $\tilde{\mu}:=\frac{\psi_{1,t}}{\psi_1}$ yields the local conservation law
\[\tilde{\rho}_t=\tilde{\mu}_s.\]
Moreover, equation (\ref{Lax pair2}) implies that the quantity
\[
\rho=\ln(\psi_1\sqrt{L_{12}})
\]
is \emph{also} a conserved density which satisfies the Riccati equation
\begin{equation}
L_{12}^2(\rho_s+\rho^2)=L_{12}^2
\left(L_1^2+L_2^2+L_3^2\right)-L_3L_{12}L_{12,s}-\frac{1}{2}L_{12}L_{21,ss}+L_{12,s}^2
\,.\label{Ric}
\end{equation}

Such a Riccati equation admits an infinite power series solution in the form,
\[
\rho=|A|\lambda^2+\rho_{-1}\lambda+\rho_0+\frac{\rho_1}{\lambda}+\frac{\rho_2}{\lambda^2}+\ldots
\]
in which $\rho_n$, $n=-1,0,1,\ldots$ are the densities of
the infinitely many conserved quantities for the integrable system implied by compatibility of the linear equations in the Lax pair \eqref{Lax pair}. These can be determined
recursively by comparing the coefficients in front of the equal
powers of $\lambda$ in (\ref{Ric}), while keeping in mind that the $L_k$ are
quadratic in $\lambda$. For example, balancing the
coefficients of $\lambda^7$ gives the first conserved
density (in $\mathbb{R}^3$ vector notation),
\begin{equation}
\rho_{-1}= \frac{A\cdot \Pi}{|A|}\label{Int1}
.
\end{equation}

Likewise, balancing the coefficients of
$\lambda^6$ and $\lambda^5$  gives, respectively,
\begin{equation}
\begin{split}
\rho_0=&\frac{1}{|A|}\left(\frac{(A\times\Pi)^2}{2|A|^2}+A\cdot\Gamma\right)
\,,\\
\rho_1=&\frac{1}{|A|}\left(\Pi\cdot\Gamma-\rho_{-1}\rho_0-\frac{1}{2}\rho_{-1,s}-
\left(\frac{A_3(\Pi_1-i\Pi_2)}{A_1-iA_2}\right)_s\right).
\end{split}
\end{equation}

Recalling that $|A|$ is a constant and neglecting total derivatives
yields an equivalent $\rho_1$ density
\begin{equation}
\rho_{1}^*= \frac{1}{|A|}\left(\Pi\cdot\Gamma
- \frac{(A\cdot\Pi)}{|A|^2}
\left(
 \frac{(A\times\Pi)^2}{2|A|^2}+A\cdot\Gamma\right)\right)
 \label{Int3}
.\end{equation}

The conserved quantities arising from the Lax representation may now be evaluated as
\begin{equation}
\begin{split}
H_{-1}=& \int (A\cdot\Pi)d s
\,,\\
H_0=&\int\left(\frac{(A\times\Pi)^2}{2|A|^2}+A\cdot\Gamma\right)d
s
\,,\\
H_1=&\int\left(\Pi\cdot\Gamma - \frac{(A\cdot\Pi)}{|A|^2}
\left(
 \frac{(A\times\Pi)^2}{2|A|^2}+A\cdot\Gamma\right)\right)ds
 \,.
\end{split}
\end{equation}

We shall seek the Hamiltonian $h$ in the Lie-Poisson form of the G-strand equations in \eqref{LP-Ham-struct-SO3} as a linear combination
of $H_{-1}$, $H_{0}$ and $H_{1}$. That is, we take
$h=c_{-1}H_{-1}+c_0H_0+c_1H_1$ for numerical constants $c_k$ yet to be determined.
The Hamiltonian $h$ must satisfy the two relations

\begin{equation}\label{Hamilt rel}
\begin{split}
\frac{\delta h}{\delta \Pi}=&c_{-1}\frac{\delta H_{-1}}{\delta
\Pi}+ c_{0}\frac{\delta H_{0}}{\delta \Pi}+c_{1}\frac{\delta
H_{1}}{\delta \Pi}=
\frac{1}{\alpha}\Gamma+\beta\Pi+\gamma A=: \Omega,\\
\frac{\delta h}{\delta \Gamma}=&c_{-1}\frac{\delta H_{-1}}{\delta
\Gamma}+ c_{0}\frac{\delta H_{0}}{\delta \Gamma}+c_{1}\frac{\delta
H_{1}}{\delta \Gamma}= \frac{1}{\alpha}\Pi+\beta A =: \Xi
\end{split}
\end{equation}

Comparing the scalar coefficients of the vectors
$A$, $\Pi$ and $\Gamma$ in (\ref{Hamilt rel})
with the diagnostic relations for $\Omega$ and $\Xi$ in \eqref{diagnos-relations} yields
\begin{equation}\label{Hamilt rel 1}
\begin{split}
\alpha=&c_{1}=1,\\
\beta=&c_{0}-\frac{A\cdot\Pi}{|A|^2},\\
\gamma=&-\frac{1}{|A|^2}\left(\frac{\Pi^2}{2}+A\cdot\Gamma\right)
+c_{-1}-c_0\frac{A\cdot\Pi}{|A|^2}+\frac{3(A\cdot\Pi)^2}{2|A|^4} ,
\end{split}
\end{equation} $c_{-1}$ and $c_0$ are arbitrary real
constants. The most general Hamiltonian could in principle contain
linear combinations of all the conserved quantities, with
coefficients $c_k$ where possibly $k>1$. In such cases the
expressions for $\alpha$, $\beta$ and $\gamma$ will contain terms
related to the higher conserved quantities entering the
Hamiltonian. In fact, it is not difficult to spot that a possible
more general choice for a Hamiltonian is

\begin{equation}\label{Hamilt so3 general}
h=\int \left(\Pi\cdot \Gamma + (\mu A\cdot\Pi +
\nu)\left(\frac{1}{2} \Pi^2 + A\cdot \Gamma\right) + f( A\cdot\Pi)
\right) ds
\end{equation}

where $\mu, \nu$ are arbitrary constants and $f( A\cdot\Pi)$ is an
arbitrary function of argument $ A\cdot\Pi$.  In such case the
scalars from (\ref{diagnos-relations}) are

\begin{equation}\label{Hamilt rel 2}
\begin{split}
\alpha=&1,\\
\beta=&\mu A\cdot\Pi +\nu,\\
\gamma=&\mu\left(\frac{1}{2} \Pi^2 + A\cdot
\Gamma\right)+f'(A\cdot\Pi) ,
\end{split}
\end{equation}

where $f'$ is the first derivative of $f$.

\paragraph{Lagrangian.}
The relation between the Lagrangian and the Hamiltonian variables
according to \eqref{diagnos-relations} is
\begin{align}
\begin{split}
\Pi &:= \frac{\delta\ell}{\delta\Omega} = \frac{1}{\beta}
\mathbb{P}_{\perp A}\left(\Omega - \frac{1}{\alpha}\Gamma\right),
\\
\Xi &:= -\, \frac{\delta\ell}{\delta\Gamma} = \frac{1}{\alpha}\Pi
+ \beta A
\end{split}
\label{PiXi-SO3}
\end{align}

where the operation $\mathbb{P}_{\perp A}$ projects out components
along $A$. The corresponding Lagrangian is found from the Legendre transformation, as
\begin{align}
\ell (\Omega,\Gamma) = \int \Pi\cdot \Omega \, ds -
h(\Pi,\Gamma)\,. \label{lag-SO3}
\end{align}


\begin{remark}\rm
Integrability of the P-chiral $SO(3)$ G-Strand system was
discussed previously in \cite{HoIvPe2012}. However, the integrable
system \eqref{LP-Ham-struct-SO3} for $SO(3)$ G-Strands found here
differs from the one studied in \cite{HoIvPe2012}. In particular,
the system \eqref{LP-Ham-struct-SO3} has a different Lax pair,
which is not related to that for the P-chiral model.
\end{remark}

\section{$SO(4)$ G-Strands}\label{SO4-sec}

This section is an exposition of how the various Lie algebraic properties involved in formulating ZCRs from matrix G-Strand equations are applied in the example of $SO(4)$.

The tangent space at the identity of the matrix Lie group $SO(4)$ is the matrix Lie algebra $\mathfrak{so}(4)$ represented by $4\times4$ skew-symmetric matrices. Any $4\times4$ skew-symmetric matrix $\widehat{\Psi}:O^{-1}\dot{O}\in \mathfrak{so}(4)$ may be represented as a linear combination of $4\times4$ basis matrices with vector coefficients $({\Omega},\,{\Gamma})\in\mathbb{R}^3\times\mathbb{R}^3$ in the following equivalent formulas for the angular velocity of rotation in four dimensions
\begin{align}
\begin{split}
\widehat{\Psi}
&=
\left(
\begin{matrix}
 0  & - \Omega_3 &  \Omega_2  & \Gamma_1  \\
\Omega_3  &  0 & - \Omega_1 & \Gamma_2 \\
-\Omega_2  & \Omega_1   & 0    & \Gamma_3 \\
-\Gamma_1  & -\Gamma_2 &  -\Gamma_3  & 0
\end{matrix}
\right)\\
&=
\sum_{a=1}^3\Omega_a\widehat{J}_a+ \Gamma_a\widehat{K}_a
=:
{\Omega\cdot\widehat{J}}
+
{\Gamma\cdot\widehat{K}}
=: (\Omega,\Gamma)
\,.
\end{split}
\label{matrixrep-so4}
\end{align}
These formulas serve to define the notation that will be found useful in the remainder of the paper.
In particular, the Lie algebra $\mathfrak{so}(4)$ may be represented in the six-dimensional basis of $4\times4$ skew-symmetric matrices $\widehat{J}$ and $\widehat{K}$.
The matrices $\widehat{J}_a$ with $a=1,2,3$ embed the basis for $3\times3$ skew-symmetric matrices into the $4\times4$ matrices by adding a row and column of zeros.
The skew matrices $\widehat{K}_a$ with $a=1,2,3$ then extend the $3\times3$ matrix basis to $4\times4$.

\begin{remark}[Commutation relations]\rm
The skew-symmetric matrix basis of $\mathfrak{so}(4)$ given by $\widehat{J}_a,\,\widehat{K}_b$ with $a,b=1,2,3$ in \eqref{matrixrep-so4} satisfies the commutation relations,
\begin{eqnarray*}
\big[\,\widehat{J}_a,\,\widehat{J}_b\,\big]
&=&
\widehat{J}_a\widehat{J}_b
-
\widehat{J}_b\widehat{J}_a
=
\epsilon_{abc}\widehat{J}_c
\,,
\\
\big[\,\widehat{J}_a,\,\widehat{K}_b\,\big]
&=&
\widehat{J}_a\widehat{K}_b
-
\widehat{K}_b\widehat{J}_a
=
\epsilon_{abc}\widehat{K}_c
=
\big[\,\widehat{K}_a,\,\widehat{J}_b\,\big]
\,,
\\
\big[\,\widehat{K}_a,\,\widehat{K}_b\,\big]
&=&
\widehat{K}_a\widehat{K}_b
-
\widehat{K}_b\widehat{K}_a
=
\epsilon_{abc}\widehat{J}_c
\,,
\end{eqnarray*}
which may be verified directly by a series of direct calculations, as $[\,\widehat{J}_1,\,\widehat{J}_2\,]
=\widehat{J}_3$, etc.
\end{remark}

\begin{remark}[Hat map for $4\times4$ skew-symmetric matrices]\rm
The map \eqref{matrixrep-so4} for the $4\times4$ skew matrix $\widehat{\Psi}$ in the skew-symmetric matrix basis of $\mathfrak{so}(4)$ given by $\widehat{J}_a,\,\widehat{K}_b$ provides the $4\times4$ version of the well-known hat map \cite{Ho2011GM2}, written now as
\[
(\,\cdot\,)\widehat{\phantom{u}}:\,\mathbb{R}^3\times\mathbb{R}^3
\mapsto \mathfrak{so}(4)
,
\]
with $(\Omega,\Gamma)\in\mathbb{R}^3\times\mathbb{R}^3$.
\end{remark}

\subsection{Commutator on $\mathfrak{so}(4)$ as an intertwined vector product}
The commutator of $4\times4$ skew-symmetric matrices corresponds to an intertwined vector product, as follows. For any vectors ${\Omega,\,\Gamma,\,\omega,\,\gamma}\in\mathbb{R}^3$, one has \cite{Ho2011GM2}
\begin{eqnarray*}
&&
{\rm ad}_{(\Omega,\Gamma)}(\omega,\gamma)
=
\Big[\,{\Omega\cdot\widehat{J}}
+
{\Gamma\cdot\widehat{K}}
\,,\,
{\omega\cdot\widehat{J}}
+
{\gamma\cdot\widehat{K}}
\,\Big]
\\&&\hspace{5mm}=
\Big({\Omega\times\omega}
+
{\Gamma\times \gamma}
\Big)\cdot
{\widehat{J}}
+
\Big({\Omega\times \gamma}
+
{\Gamma\times\omega}
\Big)\cdot
{\widehat{K}}
\\&&\hspace{5mm}=:
\Big(\Omega\times\omega + \Gamma\times \gamma
\,,\,
\Omega\times \gamma + \Gamma\times\omega\Big)
\,.
\end{eqnarray*}
Likewise, the dual operation ad$^*$ computed by choosing the matrix pairing $\langle\,A,B\,\rangle={\rm tr}(A^TB)$ is related to the vector dot-product pairing in $\mathbb{R}^3\times\mathbb{R}^3$ by
\begin{align}
\Big\langle\,
{\Omega\cdot\widehat{J}}
+
{\Gamma\cdot\widehat{K}}
\,,\,
{\omega\cdot\widehat{J}}
+
{\gamma\cdot\widehat{K}}
\,\Big\rangle
=
{\Omega\cdot \omega}
+
{\Gamma\cdot \gamma}
\,.
\label{dot-prod1}
\end{align}
That is,
\begin{align}
\Big\langle\,
\widehat{J}_a\,,\,\widehat{J}_b
\,\Big\rangle
=
\delta_{ab}
=
\Big\langle\,
\widehat{K}_a\,,\,\widehat{K}_b
\,\Big\rangle
\quad\hbox{and}\quad
\Big\langle\,
\widehat{J}_a\,,\,\widehat{K}_b
\,\Big\rangle
=
0
\,.
\label{dot-prod2}
\end{align}
One finds ad$^*$ from its definition,
\begin{align}
\Big\langle
{\rm ad}^*_{(\omega,\gamma)}(\pi,\xi), (\Omega,\Gamma)
\Big\rangle
=
\Big\langle
(\pi,\xi),\, {\rm ad}_{(\omega,\gamma)}(\Omega,\Gamma)
\Big\rangle
.
\label{ad-starSO4}
\end{align}
Then, after rearranging the pairing in \eqref{ad-starSO4} using its definition in terms of the dot-product of vectors in \eqref{dot-prod1} one finds the vector representation
\begin{align}
\begin{split}
{\rm ad}^*_{(\omega,\gamma)}(\pi,\xi)
&=
-\,( \omega\times\pi + \gamma\times\xi
\,,\, \omega\times\xi + \gamma\times\pi)
\\
&= -\,{\rm ad}_{(\omega,\gamma)}(\pi,\xi)
\end{split}
\end{align}
Hence, the hat map for $4\times4$ skew-symmetric matrices allows ad and ad$^*$ to be identified for $\mathfrak{so}(4)$, because ${\rm ad}^*_{(\omega,\gamma)}=-\,{\rm ad}_{(\omega,\gamma)}$. This property will be useful when we seek the zero curvature representation of the $SO(4)$ G-Strand equations.

\subsection{Euler--Poincar\'e equation on $\mathfrak{so}(4)^*$}
For a skew-symmetric $4\times4$ matrix
\[
\Phi=O^{-1}\delta O(t)
=\xi\cdot\widehat{J}
+\eta\cdot\widehat{K}\in \mathfrak{so}(4)
\,,\]
Hamilton's principle $\delta S=0$ for
$
S=\int_a^b \ell({\Psi})\,dt
$
with
\[
\Psi=O^{-1}\dot{O}(t)=\Omega\cdot\widehat{J}+\Gamma\cdot\widehat{K}\in \mathfrak{so}(4)
\]
leads to \cite{HoMaRa1998, Ho2011GM2}
\begin{eqnarray*}
\delta S
&=&
\int_a^b
\Big\langle \frac{\delta\ell}{\delta {\Psi}}
\,,\,\delta {\Psi}\Big\rangle
\,dt
=
\int_a^b
\Big\langle \frac{\delta\ell}{\delta {\Psi}}
\,,\,\dot{\Phi}+{\rm ad}_{\Psi}\Phi\Big\rangle
\,dt
\,,
\end{eqnarray*}
where $\delta\Psi=\dot{\Phi}+{\rm ad}_{\Psi}\Phi$ and
\begin{eqnarray*}
{\rm ad}_{\Psi}\Phi
&=&
[\Psi,\,\Phi]
=
\big[\,\Omega\cdot\widehat{J}+\Gamma\cdot\widehat{K}
,\,\xi\cdot\widehat{J}+\eta\cdot\widehat{K}\,\big]
\\
&=&
\Big(\Omega\times\xi
+
\Gamma\times \eta
\Big)\cdot
\widehat{J}
+
\Big(\Omega\times \eta
+
\Gamma\times\xi
\Big)\cdot
\widehat{K}
\,.
\end{eqnarray*}
Inserting these relations for $\delta\Psi$ and ${\rm ad}_{\Psi}\Phi$ into the variational principle yields
\begin{eqnarray*}
0 = \delta S
&=&
\int_a^b
\Big\langle -\,\frac{d}{dt}\frac{\delta\ell}{\delta {\Psi}}
\,,\,\Phi
\Big\rangle
+
\Big\langle
\frac{\delta\ell}{\delta {\Psi}}
\,,\,
{\rm ad}_{\Psi}\Phi\Big\rangle
\,dt
\\
&=&
\int_a^b
\bigg\langle
-\,\frac{d}{dt}\frac{\delta\ell}{\delta \Omega}
\cdot \widehat{J}
-\,\frac{d}{dt}
\frac{\delta\ell}{\delta \Gamma}
\cdot \widehat{K}
\,,\,\xi\cdot\widehat{J}+\eta\cdot\widehat{K}
\bigg\rangle
\,dt
\\
&&\hspace{1cm}+\
\int_a^b
\bigg\langle
\frac{\delta\ell}{\delta \Omega}
\cdot \widehat{J}
+
\frac{\delta\ell}{\delta \Gamma}
\cdot \widehat{K}
\,,\,
\\
&&\hspace{2cm}
\Big(\Omega\times\xi
+
\Gamma\times \eta
\Big)\cdot
\widehat{J}
+
\Big(\Omega\times \eta
+
\Gamma\times\xi
\Big)\cdot
\widehat{K}
\bigg\rangle
\,dt
\\
&=&
\int_a^b
\Big(
-\,\frac{d}{dt}\frac{\delta\ell}{\delta \Omega}
+
\frac{\delta\ell}{\delta \Omega}\times \Omega
+
\frac{\delta\ell}{\delta \Gamma} \times \Gamma
\Big)
\cdot \xi
\\
&&\hspace{1cm}+\
\Big(
-\,\frac{d}{dt}\frac{\delta\ell}{\delta \Gamma}
+
\frac{\delta\ell}{\delta \Gamma}\times \Omega
+
\frac{\delta\ell}{\delta \Omega} \times \Gamma
\Big)
\cdot \eta
\,dt
\,.
\end{eqnarray*}
Hence,  Hamilton's principle $\delta S=0$ yields the following two equations comprising the Euler--Poincar\'e system,
\begin{align}
\begin{split}
\frac{d}{d t} \frac{\delta\ell}{\delta \Omega}
&= \frac{\delta\ell}{\delta \Omega}\times \Omega
+
\frac{\delta\ell}{\delta \Gamma} \times \Gamma
\,,\\
\frac{d}{d t}  \frac{\delta\ell}{\delta \Gamma}
&=
\frac{\delta\ell}{\delta \Gamma}\times \Omega
+
\frac{\delta\ell}{\delta \Omega} \times \Gamma
\,.
\end{split}
\label{so4EP}
\end{align}

\noindent
These are the $\widehat{J},\,\widehat{K}$ basis components of the Euler--Poincar\'e equation on $\mathfrak{so}(4)^*$,
\[
\frac{d}{d t}\frac{\delta\ell}{\delta {\Psi}}
= {\rm ad}^*_{\Psi}\frac{\delta\ell}{\delta {\Psi}}
= -\,{\rm ad}_{\Psi}\frac{\delta\ell}{\delta {\Psi}}
\quad\hbox{or}\quad
\frac{d}{d t}
\left(\frac{\delta\ell}{\delta \Omega},\frac{\delta\ell}{\delta \Gamma} \right)
= -\,{\rm ad}_{(\Omega,\Gamma)}
\left(\frac{\delta\ell}{\delta \Omega},\frac{\delta\ell}{\delta \Gamma} \right)
\,,
\]
written using ${\Psi}
=
(\Omega,\Gamma)
=
\Omega\cdot\widehat{J}
+
\Gamma\cdot\widehat{K}$ in the matrix basis defined in \eqref{matrixrep-so4}.

\subsection{Hamiltonian form on $\mathfrak{so}(4)^*$}
Legendre-transforming $\ell(\Omega,\Gamma)$
to the Hamiltonian
\[
h(\Pi,\Xi) = \big\langle
\Pi,\Omega
\big\rangle
-
\ell(\Omega,\Gamma)
\]
yields the pairs
\[
\Pi=\frac{\delta\ell}{\delta \Omega}
\,, \qquad
\Omega=\frac{\delta h}{\delta\Pi}
\,, \quad\hbox{and}\quad
\Xi=-\,\frac{\delta\ell}{\delta \Gamma}
\,, \qquad
\Gamma =-\,\frac{\delta h}{\delta\Xi}
\,.
\]
Hence, the Euler--Poincar\'e system \eqref{so4EP} may be expressed in Hamiltonian form as
\begin{equation}
\frac{d}{d t}
    \begin{bmatrix}
   {\Pi}
    \\
   {\Xi}
    \end{bmatrix}
=\begin{bmatrix}
{\Pi}\times  &  {\Xi}\times\\
{\Xi}\times  &   {\Pi}
\times
\end{bmatrix}
    \begin{bmatrix}
   \delta h/\delta{\Pi} = \Omega \\
   \delta h/\delta{\Xi} = \Gamma
    \end{bmatrix}
    .
\label{LP-Ham-matrix-so4}
\end{equation}

The corresponding Lie--Poisson bracket is given by
\begin{align}
\{f,\,h\}
&=
-\,{\Pi}\cdot \bigg(
\frac{\delta f}{\delta\Pi}\times\frac{\delta h}{\delta\Pi}
+
\frac{\delta f}{\delta\Xi}\times\frac{\delta h}{\delta\Xi}
\bigg)\\
&\hspace{1cm}
-\,{\Xi}\cdot \bigg(
\frac{\delta f}{\delta\Pi}\times\frac{\delta h}{\delta\Xi}
-
\frac{\delta h}{\delta\Pi}\times\frac{\delta f}{\delta\Xi}
\bigg).
\label{LP-brkt-so4}
\end{align}
\begin{remark}\rm
The Hamiltonian matrix in \eqref{LP-brkt-so4} for this Lie--Poisson bracket has two null eigenvectors for the variational derivatives of $C_1=|\Pi|^2+|\Xi|^2$ and $C_2=\Pi\cdot\Xi$. The functions $C_1,\,C_2$ are called the \emph{Casimirs} of the Lie--Poisson bracket on $\mathfrak{so}(4)^*$. That is, $\{C_1,\,H\}=0=\{C_2,\,H\}$ for every Hamiltonian $H(\Pi,\,\Xi)$.
\end{remark}

\subsection{G-Strand equations on $SO(4)$}\label{GS-SO4-sec}
Let us define the following left-invariant connection form on $\mathfrak{so}(4)$,
\[
O^{-1}dO
= O^{-1}\prt_sO\,ds + O^{-1}\prt_tO\,dt
= (\Omega\cdot\widehat{J} + \Gamma\cdot\widehat{K})ds
+
(\omega\cdot\widehat{J} + \gamma\cdot\widehat{K})dt
=
\widehat{\Upsilon}
\]
where $\prt_s$ denotes the partial derivative with respect to $s$ and $\prt_t$ is the partial derivative with respect to $t$. Inserting this into Cartan's second structure equation
\begin{equation}
d\widehat{\Upsilon}+\frac12 \widehat{\Upsilon}\wedge \widehat{\Upsilon}=0
\,,
\label{Cartan2}
\end{equation}
and taking the $\widehat{J}$ and $\widehat{K}$ components yields
\begin{align}
\begin{split}
\prt_t\Gamma &=  ( \prt_s + \Omega\times)\gamma  + {\Gamma\times \omega}
\,,\\
\prt_t\Omega  &= ( \prt_s + \Omega\times)\omega + {\Gamma\times \gamma}
\,.
\end{split}
\label{ZCR1}
\end{align}

By adapting equation \eqref{so4EP} to the G-Strand case, we find
\begin{align}
\begin{split}
\prt_t \frac{\delta\ell}{\delta {\omega}}
&+{\omega}\times\frac{\delta\ell}{\delta {\omega}}
-\,
\frac{\delta\ell}{\delta \gamma} \times \gamma
=
-\,\prt_s \frac{\delta\ell}{\delta {\Omega}}
- {\Omega}\times\frac{\delta\ell}{\delta {\Omega}}
- \frac{\delta\ell}{\delta \Gamma} \times \Gamma
\,,\\
\prt_t  \frac{\delta\ell}{\delta \gamma}
&+{\omega}\times\frac{\delta\ell}{\delta \gamma}
-
\frac{\delta\ell}{\delta {\omega}} \times \gamma
=
- \,\prt_s  \frac{\delta\ell}{\delta \Gamma}
-\Omega\times\frac{\delta\ell}{\delta \Gamma}
+
\frac{\delta\ell}{\delta \Omega} \times \Gamma
\,.
\end{split}
\label{so4GS}
\end{align}
The Legendre-transform will introduce the variables
\[
\dede{l}{{\omega}} =: \pi
,\quad
\dede{l}{\gamma} =: \xi
,\quad
\dede{l}{\Omega} =: -\Pi
,\quad
\dede{l}{\Gamma}  =: -\Xi
\]
and the relations for the dual variables
\begin{equation} \label{dual var so4}
\dede{h}{\pi} = {\omega}
\,,\quad
\dede{h}{\xi} = \gamma
,\quad
\dede{h}{\Omega} = \Pi
,\quad
\dede{h}{\Gamma}  =  \Xi
\,.
\end{equation}
Therefore, let's write equations \eqref{ZCR1}--\eqref{so4GS} in this notation as
\begin{align}
\begin{split}
\prt_t\pi
&=
(\prt_s + \Omega\times)\Pi
+ \Gamma\times\Xi
+\pi\times\omega + \xi \times \gamma
\,,\\
\prt_t \xi
&=
(\prt_s  + \Omega\times)\Xi
+ \Gamma\times\Pi
+ \pi \times \gamma + \xi \times \omega
\,,\\
\prt_t\Gamma &=  ( \prt_s + \Omega\times)\gamma  + {\Gamma\times \omega}
\,,\\
\prt_t\Omega  &= ( \prt_s + \Omega\times)\omega + {\Gamma\times \gamma}
\,.
\end{split}
\label{so4-GS}
\end{align}

\subsection{Hamiltonian structure for G-Strands on $SO(4)$}\label{Gstrand-SO4}
The Hamiltonian structure for G-Strands on $SO(4)$ is, cf. \eqref{LP-Ham-matrix-so4},
\begin{equation}
\frac{\partial}{\partial t}
\begin{bmatrix}
    \pi
    \\
    \xi
    \\
    \Gamma
    \\
    \Omega
    \end{bmatrix}
\!=\!
 \begin{bmatrix}
   \pi\times  &   \xi\times  &  \Gamma\times  &   (\partial_s + \Omega\times)
   \\
  \xi\times  &  \pi\times  &  (\partial_s + \Omega\times)  &  \Gamma\times
   \\
 \Gamma\times  &  (\partial_s + \Omega\times) &  0  &  0
 \\
   (\partial_s + \Omega\times)  &  \Gamma\times  &  0  &  0
   \end{bmatrix}
   \begin{bmatrix}
   \delta h/\delta\pi=\omega\\
   \delta h/\delta \xi=\gamma\\
   \delta h/\delta\Gamma=\Xi\\
      \delta h/\delta \Omega=\Pi
   \end{bmatrix}.
\label{LP-Ham-struct-vec-s04}
\end{equation}
This Hamiltonian matrix yields a Lie--Poisson bracket defined on the dual of the semidirect-product Lie algebra $\mathfrak{so}(4)\circledS\,(\mathbb{R}^3\times\mathbb{R}^3)$ with a two-cocycle given by $\partial_s$. Dual variables are $(\pi,\xi)$ dual to $\mathfrak{so}(4)$ and $(\Gamma,\Omega)$ dual to $(\mathbb{R}^3\times\mathbb{R}^3)$.
For more discussion of the properties of this type of Hamiltonian matrix, see \cite{GBRa2009,GiHoKu1982,MaRa1999}.

The Euler--Poincar\'e equations in this system may be written individually in matrix commutator bracket notation as
\begin{align}
\begin{split}
\partial_t (\pi,\xi) &= \partial_s (\Pi,\Xi) + {\rm ad}^*_{(\omega, \gamma)}(\pi,\xi)
 - {\rm ad}^*_{(\Omega, \Gamma)}(\Pi,\Xi)
\\
&= \partial_s (\Pi,\Xi) -\,{\rm ad}_{(\omega, \gamma)}(\pi,\xi)
 + {\rm ad}_{(\Omega, \Gamma)}(\Pi,\Xi)
\\
&= \partial_s (\Pi,\Xi) -\,[(\omega, \gamma),(\pi,\xi)]
 + [(\Omega, \Gamma),(\Pi,\Xi)]
\\
 \partial_t (\Omega, \Gamma) &=  \partial_s (\omega, \gamma)
  - [(\omega, \gamma),(\Omega, \Gamma)]
\,.\end{split}
\label{ad-adstar-form}
\end{align}

\subsection{The ZCR for the $SO(4)$ G-Strand}

\begin{theorem}\rm [ZCR formulation for the $SO(4)$ G-Strand] \label{ZCR-thm-SO4}$\,$\\
The $SO(4)$ G-Strand admits a ZCR of the form
\begin{equation}
\partial_t L - \partial_s N = [L , N]\,.
\label{ZCR-so4}
\end{equation}
The matrices for $L$ and $N$ for G-Strands on $SO(4)$ each form a quadratic polynomial in a constant spectral parameter $\lambda$ in which $L$ contains the prognostic variables $(\pi,\xi)$ and $(\Omega, \Gamma)$ in \eqref{LP-Ham-struct-vec-s04} and $N$ contains the diagnostic variables $(\Pi,\Xi)$ and $(\omega, \gamma)$, as
\begin{align}
\begin{split}
 L &= \lambda^2 A  + \lambda (\pi,\xi) +  (\Omega, \Gamma)
\,,\\
 N &= \lambda^2 B + \lambda (\Pi,\Xi) +  (\omega, \gamma)
\,.
\end{split}
\label{LNso4-defs}
\end{align}
\end{theorem}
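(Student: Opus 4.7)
The plan is to mirror the strategy used for the general matrix G-Strand (Theorem 1.5) and the $SO(3)$ case (Theorem 2.3), adapted to the bigger Lie algebra $\mathfrak{so}(4)$. The key observation is that every object in $L$ and $N$ is a pair in $\mathbb{R}^3\times\mathbb{R}^3$, so I would write the constant matrices as $A=(A_1,A_2)$ and $B=(B_1,B_2)$ and use the intertwined vector-product form of the $\mathfrak{so}(4)$ commutator
\[
[(\Omega,\Gamma),(\omega,\gamma)] = \big(\Omega\times\omega+\Gamma\times\gamma,\; \Omega\times\gamma+\Gamma\times\omega\big)
\]
throughout. Then I would insert \eqref{LNso4-defs} into \eqref{ZCR-so4} and expand in powers of $\lambda$.

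Matching coefficients of each power of $\lambda$ should produce five equations, in exact analogy with \eqref{constraint-rels}:
\begin{align*}
\lambda^4 &: [A,B]=0,\\
\lambda^3 &: [A,(\Pi,\Xi)]-[B,(\pi,\xi)]=0,\\
\lambda^2 &: [A,(\omega,\gamma)]-[B,(\Omega,\Gamma)]+[(\pi,\xi),(\Pi,\Xi)]=0,\\
\lambda^1 &: \partial_t(\pi,\xi)-\partial_s(\Pi,\Xi) = -[(\omega,\gamma),(\pi,\xi)] + [(\Omega,\Gamma),(\Pi,\Xi)],\\
\lambda^0 &: \partial_t(\Omega,\Gamma)-\partial_s(\omega,\gamma) = -[(\omega,\gamma),(\Omega,\Gamma)].
\end{align*}
The heart of the proof is to recognize that the $\lambda^0$ and $\lambda^1$ lines are exactly the commutator-form G-Strand system \eqref{ad-adstar-form} derived in Section \ref{Gstrand-SO4}, after using $\mathrm{ad}^*_{(\omega,\gamma)}=-\,\mathrm{ad}_{(\omega,\gamma)}$ on $\mathfrak{so}(4)$, which was established in the subsection on the intertwined vector product. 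Thus the two dynamical coefficients automatically reproduce the Euler--Poincar\'e and compatibility equations, and there is nothing further to check on the prognostic side.

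What remains is to interpret the $\lambda^2$, $\lambda^3$, $\lambda^4$ equations as algebraic constraints relating the diagnostic variables $(\Pi,\Xi)$ and $(\omega,\gamma)$ to the prognostic variables $(\pi,\xi)$ and $(\Omega,\Gamma)$, in parallel with the role played by \eqref{lambda3-ad} and \eqref{lambda2-ad} in the general theorem. The $\lambda^4$ condition forces $A$ and $B$ to be commuting elements of $\mathfrak{so}(4)$; the $\lambda^3$ condition gives a linear equation $\mathrm{ad}_A(\Pi,\Xi)=\mathrm{ad}_B(\pi,\xi)$ determining $(\Pi,\Xi)$ from $(\pi,\xi)$; and the $\lambda^2$ condition then determines $(\omega,\gamma)$ from $(\Omega,\Gamma)$ and $(\pi,\xi)$ via $\mathrm{ad}_A(\omega,\gamma)=\mathrm{ad}_B(\Omega,\Gamma)-\mathrm{ad}_{(\pi,\xi)}(\Pi,\Xi)$. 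These are precisely the $SO(4)$ instances of the structural relations in Theorem \ref{ZCR-thm}, so the proof reduces to verifying this identification component by component in the $\widehat{J},\widehat{K}$ basis.

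The main obstacle I anticipate is the non-invertibility of $\mathrm{ad}_A$ on $\mathfrak{so}(4)$. Because $\mathfrak{so}(4)\cong\mathfrak{so}(3)\oplus\mathfrak{so}(3)$, the operator $\mathrm{ad}_A$ typically has a nontrivial kernel, so the constraint equations at orders $\lambda^2$ and $\lambda^3$ do not uniquely determine the diagnostic variables. This introduces free parameters (analogous to the scalars $\alpha,\beta,\gamma$ appearing in \eqref{diagnos-relations} for the $SO(3)$ case) that must later be fixed by comparison with a Hamiltonian built from the conserved quantities of the Lax pair. At the level of the theorem statement, however, these degrees of freedom are harmless: the theorem only asserts existence of a ZCR of the prescribed form, and the coefficient-matching argument above supplies exactly that once $A,B$ are chosen to commute and the diagnostic variables are taken to satisfy the resulting linear relations.
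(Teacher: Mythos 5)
Your coefficient expansion is correct and matches the paper's: substituting \eqref{LNso4-defs} into \eqref{ZCR-so4} and collecting powers of $\lambda$ gives exactly the five relations of \eqref{ZCR-extratermO4}, the $\lambda^1$ and $\lambda^0$ lines reproduce the commutator-form system \eqref{ad-adstar-form}, and the remaining orders are algebraic constraints on the diagnostic variables. (In fact your $\lambda^2$ line, ${\rm ad}_A(\omega,\gamma)={\rm ad}_B(\Omega,\Gamma)-{\rm ad}_{(\pi,\xi)}(\Pi,\Xi)$, is the sign-consistent version; the paper's displayed $\lambda^2$ relation has the two right-hand terms with opposite signs, which appears to be a slip relative to the general pattern of \eqref{constraint-rels}.)

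The gap is in the last step. You treat the $\lambda^3$ and $\lambda^2$ constraints as automatically solvable ``once $A,B$ are chosen to commute,'' and you frame the non-invertibility of ${\rm ad}_A$ purely as a source of non-uniqueness. But on $\mathfrak{so}(4)$ the operator ${\rm ad}_A$ has nontrivial cokernel as well as kernel, so the equations ${\rm ad}_A(\Pi,\Xi)={\rm ad}_B(\pi,\xi)$ and ${\rm ad}_A(\omega,\gamma)={\rm ad}_B(\Omega,\Gamma)-{\rm ad}_{(\pi,\xi)}(\Pi,\Xi)$ need their right-hand sides to lie in the range of ${\rm ad}_A$ for \emph{arbitrary} prognostic fields; this is a genuine compatibility condition, not a formality, and it is where the actual content of the existence claim lives. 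The paper closes this by exhibiting two explicit resolutions: the choice $A=B=(a_1,a_2)$, which gives $(\Pi,\Xi)=(\pi,\xi)+\beta(a_1,a_2)$ and $(\omega,\gamma)=(\Omega,\Gamma)+\beta(\pi,\xi)+\sigma(a_1,a_2)$ with the scalars $\beta,\sigma$ fixed by a compatible Hamiltonian; and the choice $A=(a,b)$, $B=(b,a)$ with $a\times b\neq 0$, which swaps the $\widehat{J}$ and $\widehat{K}$ components in the diagnostic relations and yields a different integrable Hamiltonian. Your proof would be complete if you carried out at least one such explicit construction (or otherwise verified the range conditions); as written, the assertion that ``the diagnostic variables are taken to satisfy the resulting linear relations'' presupposes exactly what needs to be shown.
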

\begin{proof}
One may substitute these expressions for $L$ and $N$ into the ZCR
in \eqref{ZCR-so4}, then collect coefficients of powers of
$\lambda$ as before to find the following system, cf. equation
\eqref{constraint-rels-SO3},  
\begin{align}
\begin{split}
&\lambda^4:  [A,B]  = 0
\\
&\lambda^3: {\rm ad}_{A}(\Pi,\Xi) = {\rm ad}_B(\pi,\xi)
\\
&\lambda^2:  {\rm ad}_A(\omega, \gamma) = {\rm
ad}_{(\pi,\xi)}(\Pi,\Xi) - {\rm ad}_B(\Omega,\Gamma)
\\
&\lambda:
\partial_t (\pi,\xi) -
\partial_s (\Pi,\Xi) = [(\pi,\xi),(\omega, \gamma)]
 + [(\Omega, \Gamma),(\Pi,\Xi)]
\\
&\lambda^0:
 \partial_t (\Omega, \Gamma) -  \partial_s (\omega, \gamma)
  = [(\Omega, \Gamma),(\omega, \gamma)]
\end{split}
\label{ZCR-extratermO4}
\end{align}                            
The last two equations recover the system \eqref{ad-adstar-form}
and the first three equations have been solved analogously to the
system \eqref{constraint-rels}. The results define the relations
among prognostic and diagnostic variables for which the $SO(4)$
G-Strand equations may be written in ZCR form and, thus, will be
completely integrable.

We present here two integrable constructions.

{\bf Example 1}. We can follow the $SO(3)$ example taking
$A=B=(a_1,a_2).$ The diagnostic relations are similar to
\eqref{constraint-rels}:
\begin{equation}
\begin{split}
(\Pi,\Xi)=&(\pi,\xi)+\beta(a_1,a_2) , \\
(\omega, \gamma)=&(\Omega, \Gamma)+\beta(\pi,\xi)+\sigma(a_1,a_2),
\end{split}
\end{equation} where $\beta, \sigma$ are yet undetermined scalars. Then we can write a Hamiltonian,
analogous to (\ref{Hamilt so3 general})
\begin{equation}
h=\int\left(\pi\cdot\Omega+ \xi \cdot \Gamma +(\mu r + \nu)\left(
\frac{\pi^2 +\xi^2}{2}+a_1\cdot\Omega + a_2 \cdot \Gamma
\right)+f(r)\right) ds,
\end{equation}
with $\mu, \nu$ arbitrary constants, $f$ - an arbitrary function
of one argument and $$r=a_1\cdot\pi + a_2 \cdot \xi.$$ From
(\ref{dual var so4}) we determine
\begin{equation}
\begin{split}
\beta =&\mu (a_1\cdot\pi +
a_2 \cdot \xi) + \nu = \mu r + \nu, \\
\sigma =&\mu\left( \frac{\pi^2 +\xi^2}{2}+a_1\cdot\Omega + a_2
\cdot \Gamma \right)+ f'(r).
\end{split}
\end{equation}

{\bf Example 2}. The commutator between $A$ and $B$ in the
$\lambda^4$ relation can be made to vanish by making the choice
$A=(a,b)$, $B=(b,a)$ where $a,b$ are constant 3-dimensional
vectors, $a\times b \neq 0$. The diagnostic relations among the
variables in \eqref{ZCR-extratermO4} can then be resolved as
\begin{equation}
\begin{split}
\Pi=& \xi+\nu a \,,\\
\Xi=&\pi+\nu b \,,\\
\omega=& \Gamma + \nu \pi+ \mu a + \sigma b\,,\\
\gamma=& \Omega + \nu \xi + \sigma a+ \mu b\,,
\end{split}
\label{var-der}
\end{equation} where $\nu, \mu, \sigma $ are arbitrary constants.

The Lie-Poisson equations \eqref{LP-Ham-struct-vec-s04} may be expanded as
\begin{equation}
\begin{split}
\pi_t-\xi_s=&\nu(\Omega \times a+\Gamma \times b) +\pi\times\alpha
+\xi\times\beta \,,\\
\xi_t-\pi_s=&\nu(\Omega \times b+\Gamma \times a)+
\pi\times\beta+\xi\times\alpha \,,\\
\Omega_t - \Gamma_s- \nu \pi_s=& \nu
(\Omega\times\pi+\Gamma\times\xi) + \Omega\times\alpha+
\Gamma\times\beta
\,,\\
\Gamma_t - \Omega_s- \nu \xi_s=& \nu
(\Omega\times\xi+\Gamma\times\pi) + \Omega\times\beta+
\Gamma\times\alpha\,.
\end{split}
\end{equation}

where $\alpha:=\mu a + \sigma b$ and $\beta: = \sigma a +\mu b$
are constant vectors. The corresponding Hamiltonian may then be
obtained from \eqref{var-der} as
\begin{equation}
h=\int \left(\frac{\nu}{2} \pi^2 +\frac{\nu}{2}\xi^2 + \pi \cdot
\Gamma + \xi \cdot \Omega + \alpha\cdot \pi + \beta \cdot \xi +
\nu(a\cdot\Omega+ b\cdot \Gamma) \right) ds\,.
\end{equation}

\end{proof}


\section{$SE(3)$ G-Strands}\label{SE3-sec}

\subsection{$SE(3)$ dynamics of a filament}
The motion of a filament represented by a space curve in $\mathbb{R}^3$ may be defined by a smooth time-dependent map
\[
\mathbf{c}(t,s):\mathbb{R}\times [0,1]\to \mathbb{R}^3
,\]
where $t$ is time and $s$ is arclength. This motion may be represented in terms of the action of the Lie group $SE(3)$ on its Lie algebra $\mathfrak{se}(3)\simeq\mathbb{R}^3$, as
\begin{align}
g(t,s)\mathbf{c}(0,s)
=\mathbf{c}(t,s),
\quad\hbox{for all}\quad
s\in [0,1]
,
\label{GS-rep}
\end{align}
where $g(t,s)\in SE(3)$ is a real-valued map  $g:\,\mathbb{R}\times[0,1]\to SE(3)$, and $\mathbf{c}(0,s)$ is the initial spatial configuration of the curve $\mathbf{c}(t,s)$. The relation \eqref{GS-rep} lifts the problem from the space of three-dimensional curves to the G-Strand map to $SE(3)$.

To find the Euler--Poincar\'e equations of motion represented this way for space curves, or filaments, we consider Hamilton's principle $\delta S=0$ for a left-invariant Lagrangian,
\begin{eqnarray}
S=\int_a^b \!\!\!\int_{-\infty}^\infty\!\! \ell(\Omega,\Xi)\,ds\,dt
\,,
\end{eqnarray}
with $\ell: \mathfrak{se}(3)\times(\mathbb{R}^3\times\mathbb{R}^3)\to\mathbb{R}$ and the following definitions of the tangent vectors
$\Omega=(\bW,\V)\in\mathfrak{se}(3)$ and $\Xi=(\bOm,\bGam)\in(\mathbb{R}^3\times\mathbb{R}^3)$,
\begin{eqnarray}
\Omega(t,s)=g^{-1}\partial_t g(t,s)
\quad\hbox{and}\quad
\Xi(t,s)=g^{-1}\partial_s g(t,s)
\,,
\end{eqnarray}
where $g(t,s)\in SE(3)$ is a real-valued map
$g:\,\mathbb{R}\times\mathbb{R}\to SE(3)$.

These tangent vectors have a $4\times4$ matrix representation in the form,
\begin{align}
\begin{split}
\widehat{\Xi} = (\bOm,\bGam)
&=
\left(
\begin{matrix}
 0  & - \Omega_3 &  \Omega_2  & \Gamma_1  \\
\Omega_3  &  0 & - \Omega_1   & \Gamma_2 \\
-\Omega_2  & \Omega_1   & 0    & \Gamma_3 \\
0  & 0 &  0  & 0
\end{matrix}
\right)\\
&=
\Omega_a\widehat{J}_a+ \Gamma_b\widehat{K}_b
\,.
\end{split}
\label{matrixrep-se3}
\end{align}
This $4\times4$ matrix represents the velocities of rotations and translations in three dimensions. It may be compared with the $\mathfrak{so}(4)$ matrix \eqref{matrixrep-so4}. The two treatments are similar in appearance, but of course their differences are important.

Next, we apply the Euler--Poincar\'e procedure and incorporate the partial spatial derivative in the definition of $\Xi(t,s)=g^{-1}\partial_s g(t,s)\in \mathfrak{se}(3)$ by taking the following steps.

\begin{enumerate} [(a)]
\item
{\bf Compatibility equation}
The auxiliary equation for the evolution of the components $\Xi=(\bOm,\bGam)\in\mathbb{R}^3\times\mathbb{R}^3$ of $\Xi(t,s):\,\mathbb{R}\times[0,1]\to \mathfrak{se}(3)$ may be obtained from its definition and the equality of cross derivatives in $t$ and $s$, as follows.

We take the difference of the two equations for the partial derivatives
\begin{eqnarray*}
\partial_t\Xi(t,s)
&=&
-\big(g^{-1}\partial_tg\big)\big( g^{-1}\partial_s g\big)
+
g^{-1}\partial_t\partial_s g(t,s)
\,,\\
\partial_s\Omega(t,s)
&=&
-\big(g^{-1}\partial_s g\big) \big(g^{-1}\partial_t g\big)
+
g^{-1}\partial_s\partial_t g(t,s)
\,.
\end{eqnarray*}
Taking the difference between these two equations and invoking equality of cross derivatives implies that $\Xi$ evolves by the ad operation on $\mathfrak{se}(3)$.%
\footnote{The variational derivative of $\Omega $ satisfies a similar equation.}
Namely,
\begin{equation}
\partial_t\Xi(t,s) - \partial_s\Omega(t,s)
= \Xi \,\Omega - \Omega\,\Xi
= [\Xi,\, \Omega]
=: -  {\rm ad}_\Omega\Xi
\,.
\label{aux-eqn}
\end{equation}
In terms of vectors in $\mathbb{R}^3\times\mathbb{R}^3$, the ad-operation on $\mathfrak{se}(3)$ is given by
\begin{eqnarray}
{\rm ad}_\Omega\Xi
=
{\rm ad}_{(\bW\,,\,\V)}(\bOm,\,\bGam)
=
\Big(\bW\times\bOm,\,\bW\times \bGam-\bOm\times\V\Big)
\,.
\label{ad-op-se3}
\end{eqnarray}
Hence, the auxiliary equation \eqref{aux-eqn} for $\Omega=(\bW\,,\,\V)$ and $\Xi=(\bOm,\,\bGam)$ in $\mathbb{R}^3\times\mathbb{R}^3$ vector form may be written equivalently as
\begin{equation}
\partial_t(\bOm,\, \bGam)
-
\partial_s(\bW,\,\V)
=
-\Big(\bW\times\bOm,\,\bW\times \bGam-\bOm\times\V\Big)
\,.
\label{Compat-eqn}
\end{equation}
\item
{\bf The ${\ad}^*$ action of $\mathfrak{se}(3)$ on its dual in the cross pairing.}
We shall compute the ${\ad}^*$ action of the Lie algebra $\mathfrak{se}(3)$ on its dual $\mathfrak{se}(3)^*$ by using the following \emph{cross pairing}, defined as \cite{Ra1982}
\begin{equation}
\Big\langle\!\!\Big\langle
(\bPi,\M), (\bOm,\bGam)
\Big\rangle\!\!\Big\rangle
= \bPi\cdot\bGam + \M\cdot\bOm
\label{cross-pairing}
\end{equation}
In terms of the cross pairing \eqref{cross-pairing}, one computes the ${\ad}^*$ action as
\begin{align}
\begin{split}
\Big\langle\!\!\Big\langle
{\rm ad}^*_{(\W\,,\,\V)} (\bPi\,,\,\M)\,,\,
(\bOm\,,\,\bGam)
\Big\rangle\!\!\Big\rangle
&=
\Big\langle\!\!\Big\langle
(\bPi\,,\,\M)\,,\,
{\rm ad}_{(\W\,,\,\V)}(\bOm\,,\,\bGam)
\Big\rangle\!\!\Big\rangle
\\
&\hspace{-2cm}=
\Big\langle\!\!\Big\langle
(\bPi\,,\,\M)\,,\,
\Big(\bW\times\bOm,\,\bW\times \bGam-\bOm\times\V\Big)
\Big\rangle\!\!\Big\rangle
\\
&\hspace{-2cm}=
\bPi\cdot\Big(\bW\times \bGam-\bOm\times\V\Big)
+ \M\cdot(\bW\times\bOm)
\\
&\hspace{-2cm}=
\Big\langle\!\!\Big\langle
\Big(\bPi\times\bW,\,\W\times \M-\V\times\bPi\Big)
\,,\,
(\bOm\,,\,\bGam)
\Big\rangle\!\!\Big\rangle
.
\end{split}
\label{adstar-se3}
\end{align}
Consequently, we find a linear relation between the ad and ad$^*$ operations on $\mathfrak{se}(3)$,
\begin{align}
\begin{split}
{\rm ad}^*_\Omega\Xi
=
{\rm ad}^*_{(\W\,,\,\V)}(\bPi\,,\,\M)
&=
-\,\Big(\bW\times\bPi,\,\M\times \W+\V\times\bPi\Big)
\\
&=
-\,{\rm ad}_{(\W\,,\,\V)}(\bPi\,,\,\M)
=
-\, {\rm ad}_\Omega\Xi
\end{split}
\label{adstar-def-se3}
\end{align}
Hence, the cross pairing \eqref{cross-pairing} enables ad$^*$ and ad to be identified for $\mathfrak{se}(3)$ by setting ${\rm ad}^*_{(\W\,,\,\V)}=-\,{\rm ad}_{(\W\,,\,\V)}$ as in \eqref{addagger-adstar}. This is the key property \eqref{ad-adstar} for the $SE(3)$ G-Strand equations to admit a zero curvature representation.

\item {\bf Applying the Euler--Poincar\'e theorem.} Using the
Euler--Poincar\'e theorem for left-invariant Lagrangians will
allow us to obtain the equations of motion for the momentum
$\Pi:\,\mathbb{R}\times[0,1]\to \mathfrak{se}(3)^*\simeq
\mathbb{R}^3\times\mathbb{R}^3$, as follows, where we denote
\[
\Pi := \frac{\delta \ell}{\delta \Omega} =\left( \frac{\delta
\ell}{\delta \bW} ,\, \frac{\delta \ell}{\delta \V} \right) =:
(\bPi,\M) \,.
\]
In deriving the Euler--Poincar\'e equations, we will use the $L^2$ pairing defined by spatial integration of the $\mathbb{R}^3\times\mathbb{R}^3$ cross pairing in \eqref{cross-pairing},
\[
\Big<\Pi\,,\,\Xi \Big>
=
\Big<(\bPi,\M)\,,\,(\bOm,\V)  \Big>
=
\int (\bPi\cdot\V + \M\cdot\bOm)\,ds
\,.
\]
We will assume homogeneous endpoint and boundary conditions on $\Omega(t,s)$, $\Xi(t,s)$ and on the variation
\[
\widetilde{\Xi}
:= (\widetilde{\bOm},\,\widetilde{\V})
=\tilde{g}^{-1}\delta \tilde{g}(t,s)\in \mathfrak{se}(3)
\]
when integrating by parts.

As in the previous section, Hamilton's principle $\delta S=0$ yields the Euler--Poincar\'e part of the G-Strand equations for $\delta\ell/\delta\Omega\in\mathfrak{se}(3)^*$,
\begin{equation}
\frac{\partial}{\partial t} \frac{\delta\ell}{\delta \Omega}
= {\rm ad}^*_\Omega\frac{\delta\ell}{\delta \Omega}
- \frac{\partial}{\partial s}  \frac{\delta\ell}{\delta \Xi}
+ {\rm ad}^*_\Xi\frac{\delta\ell}{\delta \Xi}
\,.
\label{2timeEP}
\end{equation}
By identifying ${\rm ad}^*_\Omega=-\,{\rm ad}_\Omega$ for the choice of the cross pairing in \eqref{cross-pairing} and using the definitions $\Omega:=(\bW,\V)$ and $\Xi:=(\bOm,\bGam)$, the previous equation may be written in vector form as
\begin{align}
\begin{split}
\frac{\partial}{\partial t}
&\left(
\frac{\delta\ell}{\delta \bW}, \frac{\delta\ell}{\delta \V}
\right)
+ {\rm ad}_{(\bW,\V)}
\left(
\frac{\delta\ell}{\delta \bW}, \frac{\delta\ell}{\delta \V}
\right)
\\
&+
\frac{\partial}{\partial s} \left(
\frac{\delta\ell}{\delta \bOm}, \frac{\delta\ell}{\delta \bGam}
\right)
+
{\rm ad}_{(\bOm,\bGam)}
\left(
\frac{\delta\ell}{\delta \bOm}, \frac{\delta\ell}{\delta \bGam}
\right)
= 0
\,.
\end{split}
\label{2timeEPa}
\end{align}
On using the definition of ad in equation \eqref{ad-op-se3}, this Euler--Poincar\'e system on $\mathbb{R}^3\times\mathbb{R}^3$ develops into
\begin{align}
\begin{split}
\frac{\partial}{\partial t}
&\left(
\frac{\delta\ell}{\delta \bW}, \frac{\delta\ell}{\delta \V}
\right)
+
\left(
\bW\times \frac{\delta\ell}{\delta \bW},
\bW\times\frac{\delta\ell}{\delta \V}
- \frac{\delta\ell}{\delta \bW}\times\V
\right)
\\
&+
\frac{\partial}{\partial s} \left(
\frac{\delta\ell}{\delta \bOm}, \frac{\delta\ell}{\delta \bGam}
\right)
+
\left(
\bOm\times \frac{\delta\ell}{\delta \bOm},
\bOm\times\frac{\delta\ell}{\delta \bGam}
- \frac{\delta\ell}{\delta \bOm}\times\bGam
\right)
= 0
\,.
\end{split}
\label{2timeEPb}
\end{align}
\begin{remark}\rm
The main effect of using the cross pairing \eqref{cross-pairing} is to change the order of the following variational derivatives,
\[
\left(\bPi=\frac{\delta\ell}{\delta \bW}, \M=\frac{\delta\ell}{\delta \V}\right)\to(\M,\bPi)
\quad\hbox{and}\quad
\left(\bM = - \frac{\delta\ell}{\delta \bOm},\bN=-\frac{\delta\ell}{\delta \bGam}\right)\to(\bN,\bM)
\,.\]
Hence, the Euler--Poincar\'e system in \eqref{2timeEPb} may be written in vector form as
\begin{align}
\begin{split}
\frac{\partial}{\partial t}
&\left(
\M, \bPi
\right)
+
\left(
\bW\times \M,
\bW\times\bPi
- \M\times\V
\right)
\\
&-\,
\frac{\partial}{\partial s} \left(
\bN, \bM
\right)
+
\left(
-\bOm\times \bN,
-\bOm\times\bM
+ \bN\times\bGam
\right)
= 0
\,.
\end{split}
\label{2timeEPc}
\end{align}
\end{remark}

This system of Euler--Poincar\'e equations \eqref{2timeEPc} must be completed by including the compatibility equation \eqref{Compat-eqn}.

\item
{\bf The Legendre transform to the Hamiltonian side.}
We take the Legendre transform of the Lagrangian $\ell(\Omega,\Xi)=\ell((\bW,\V)\,,\,(\bOm,\bGam))$ to obtain the corresponding Hamiltonian. We then differentiate the Hamiltonian to find its partial derivatives and write the Euler--Poincar\'e equation in terms of the momentum variable $\Pi=\delta\ell/\delta\Omega\in \mathfrak{se}(3)^*$.

Legendre transforming in $\Omega$ yields
\begin{eqnarray*}
\frac{\delta l}{\delta \Omega} = \Pi
\,,\quad
\frac{\delta h}{\delta \Pi} = \Omega
\quad\hbox{and}\quad
\frac{\delta h}{\delta \Xi} = -\,\frac{\delta \ell}{\delta \Xi}
\,,
\end{eqnarray*}
where the relation $\Pi - {\delta l/\delta \Omega}=0$ defines $\Pi$.
These derivatives allow one to rewrite the Euler--Poincar\'e equations solely in terms of momentum $\Pi=(\bPi,\,\M)\in \mathfrak{se}(3)^*$ and its dual velocity $\Xi=(\bOm,\, \bGam)\in \mathfrak{se}(3)$ as
\begin{align}
\begin{split}
{\partial_t} \Pi
&= {\rm ad}^*_{\delta h/\delta \Pi}\, \Pi
+ \partial_s \frac{\delta h}{\delta \Xi}
- {\rm ad}^*_\Xi\,\frac{\delta h}{\delta \Xi}
\,,\\
\partial_t \Xi
&= \partial_s\frac{\delta h}{\delta \Pi}
-  {\rm ad}_{\delta h/\delta \Pi}\,\Xi
\,.
\end{split}
\label{GSeqns-se3}
\end{align}
These are the complete G-Strand equations for the Lagrangian $\ell: \mathfrak{se}(3)\to\mathbb{R}$.

\item
We determine the Lie--Poisson bracket implied by the G-Strand equations in terms of the momenta $\Pi=\delta\ell/\delta\Omega$ obtained from the  Legendre transformation, by rearranging the time derivative of a smooth real function $f(\Pi,\Xi):\,\mathfrak{g}^*\times \mathfrak{g}\to \mathbb{R}$.

Assembling equations (\ref{GSeqns-se3}) into Lie--Poisson Hamiltonian form gives, symbolically,%
\footnote{This Hamiltonian matrix is a subset of the Hamiltonian matrix for a perfect complex fluid \cite{Ho2002,GBRa2009}. It also appears in the Lie--Poisson brackets for Yang--Mills fluids \cite{GiHoKu1982} and for spin glasses \cite{HoKu1988}. It is the Hamiltonian structure for a Lie--Poisson bracket on the dual of the semidirect-product Lie algebra $\mathfrak{se}(3)\circledS\,(\mathbb{R}^3\times\mathbb{R}^3)$ with a two-cocycle given by $\partial_s$.}
%
\begin{equation}
\frac{\partial}{\partial t}
    \begin{bmatrix}
    \Pi =(\bPi,\,\M)
    \\
    \Xi =(\bOm,\, \bGam)
    \end{bmatrix}
=
\begin{bmatrix}
  {\rm ad}^\ast_\square \Pi
   &
   (\partial_s - {\rm ad}^*_\Xi)\square
   \\
   (\partial_s - {\rm ad}_\square)\Xi
   & 0
    \end{bmatrix}
    \begin{bmatrix}
   \delta h/\delta\Pi = \Omega = (\W,\V)\\
   \delta h/\delta\Xi = \Gamma = (\bM,\bN)
    \end{bmatrix}
 \label{SMK-HamMatrix}
\end{equation}
where the boxes $\square$ indicate how the ad- and ad$^*$\!-operations occur in the matrix multiplication. For example,
\[
{\rm ad}^\ast_\square \Pi(\delta h/\delta\Pi)= {\rm ad}^\ast_{\delta h/\delta\Pi} \Pi
\,,
\]
so each entry in the matrix acts on its corresponding vector component. Of course, this is shorthand notation, since $\Pi$ and $\Xi$ are $4\times4$ matrices.
\item
The G-Strand equations for $\mathfrak{g}=\mathfrak{se}(3)$ in terms of the $\mathbb{R}^3\times\mathbb{R}^3$ vector operations in \eqref{2timeEPb} may be written in Hamiltonian matrix form as,
%
\begin{equation}
\frac{\partial}{\partial t}
\left[
\begin{array}{c}
    \bPi
    \\
    \M
    \\
   \bGam
    \\
    \bOm
    \end{array}
\right]
=
\left[
\begin{array}{cccc}
 \bPi\times  &   \M\times     &   \bGam\times   &   (\partial_s + \bOm\times)
   \\
   \M \times  &  0  &  (\partial_s + \bOm\times)   &   0
   \\
   \bGam\times  &  (\partial_s + \bOm\times)  &  0  &  0
   \\
   (\partial_s + \bOm\times)  &  0  &  0  &  0
    \end{array}
\right]
\left[
\begin{array}{c}
   \delta h/\delta\bPi =\bW \\
   \delta h/\delta\M =\V \\
    \delta h/\delta \bGam =\bN  \\
    \delta h/\delta\bOm =\bM
    \end{array}
\right]
\label{LP-Ham-struct-SMK-se3}
\end{equation}

This {\bfi Lie--Poisson} Hamiltonian matrix possesses a two-cocycle given by $\partial_s$, as discussed in \cite{EGBHPR2010,SiMaKr1988}. It may be compared with the Hamiltonian matrix for $SO(4)$ G-Strands in equation \eqref{LP-Ham-struct-vec-s04}.
It simplifies for $\mathfrak{g}=\mathfrak{se}(2)$, since in that case the variables $\{\bPi,\bW,\bOm,\delta h/\delta \bOm=\bM\}$ are all normal to the plane, and the others $\{\M,\V,\bGam,\delta h/\delta \bGam=\bN\}$ are all in the plane. This simplification will be discussed in the next section. Basically, for motion of a planar curve the contributions of the vector cross product terms at the corners of the Lie--Poisson form in \eqref{LP-Ham-struct-SMK-se3} do not contribute to planar motion.

\item
The $SE(3)$ G-Strand equations in \eqref{LP-Ham-struct-SMK-se3} may also be written as individual equations, as in \cite{SiMaKr1988}
\begin{align}
\begin{split}
\displaystyle
\prt_t \bPi - \prt_s \bM
& = \bPi\times\bW +\M\times \V + \bGam\times \bN + \bOm\times\bM
,
\\
 \displaystyle
\prt_t \M - \prt_s\bN
& = \M\times \bW + \bOm\times \bN
,
\\
\prt_t \bGam - \prt_s \V
& = \bGam \times \bW + \bOm\times \V
,\\
\partial_t\bOm - \prt_s\bW
& = \bOm\times\bW.
\end{split}
\label{SE3Gstrand-eqns}
\end{align}
\begin{remark}[Relation to the SMK equations of exact rod theory]\rm$\,$\\
The SE(3) G-Strand equations in \eqref{SE3Gstrand-eqns} coincide with the SMK convective representation of exact rod theory studied in \cite{EGBHPR2010,SiMaKr1988}, except for the extra term $\M\times \V$, which vanished in \cite{SiMaKr1988} for the choice of constitutive relations in the Hamiltonian there, although it was still present in the Lie--Poisson bracket.
The SMK equations describe the space-time dependent bending, twisting, coiling, and wave propagation properties of filaments.
\end{remark}

Next, we determine the conditions required for the $SE(3)$ G-Strand equations to admit a ZCR.
Then, in the section afterward, we discuss the simplifications that result under the reduction from $SE(3)$ to $SE(2)$.

\end{enumerate}\bigskip

\subsection{The ZCR for the motion of space curves by $SE(3)$}
{\bf The ZCR for the $SE(3)$ G-Strand.}
To compute the zero curvature representation of the $SE(3)$ G-Strand equations, we first use the equivalence of ${\rm ad}^*$ and $-\,{\rm ad}$ as in \eqref{addagger-adstar} in the bi-invariant cross pairing to rewrite equation \eqref{GSeqns-se3} equivalently as
\begin{align}
\begin{split}
{\partial_t} \Pi
&= -\,{\rm ad}_{\Gamma}\, \Pi
+ \partial_s \Omega
+ {\rm ad}_\Xi\,\Omega
\,,\\
\partial_t \Xi
&= \partial_s \Gamma
-  {\rm ad}_{\Omega}\,\Xi
\,,
\end{split}
\label{GSeqns-se3a}
\end{align}
where $\Pi=(\bPi,\,\M)$, $\Xi:=(\bOm,\bGam)$, $\Omega:=(\bW,\V)$ and $\Gamma = (\bM,\bN)$, in terms of $\mathbb{R}^3\times\mathbb{R}^3$ vectors, as in equation \eqref{SMK-HamMatrix}.

\begin{theorem}\rm [ZCR formulation for the $SE(3)$ G-Strand] \label{ZCR-thm-SE3}$\,$\\
The $SE(3)$ G-Strand admits a ZCR of the form
\begin{equation}
\partial_t L - \partial_s N = [L , N]\,.
\label{ZCR-so4}
\end{equation}
The matrices for $L$ and $N$ for G-Strands on $SE(3)$ each form a quadratic polynomial in a constant spectral parameter $\lambda$  in which $L$ contains the prognostic variables $(\Pi,\Xi)$ and $N$ contains the diagnostic variables $(\Omega,\Gamma)$, as
\begin{align}
\begin{split}
 L &= \lambda^2 A  + \lambda \Pi +  \Xi
\,,\\
 N &= \lambda^2 B + \lambda \Gamma +  \Omega
\,,
\end{split}
\label{LNse3-defs}
\end{align}
in which the constant matrices $A$ and $B$ commute, $[A,B]  = 0$.
\end{theorem}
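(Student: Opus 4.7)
The plan is to mimic the proofs of Theorem \ref{ZCR-thm} and Theorem \ref{ZCR-thm-SO4}, exploiting the key identification ${\rm ad}^*_\Omega = -\,{\rm ad}_\Omega$ on $\mathfrak{se}(3)$ established in \eqref{adstar-def-se3} under the cross pairing \eqref{cross-pairing}. Specifically, I would substitute the ansatz
\[
L = \lambda^2 A + \lambda \Pi + \Xi, \qquad N = \lambda^2 B + \lambda\Gamma + \Omega
\]
into the ZCR $\partial_t L - \partial_s N = [L,N]$ and equate coefficients of each power of $\lambda$ to zero. Since neither $A$, $B$ nor the assignments $\Pi,\Xi,\Omega,\Gamma$ depend on one another in $\lambda$, this produces five independent matrix equations on $\mathfrak{se}(3)$, one for each of $\lambda^0,\ldots,\lambda^4$.

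The $\lambda^4$ equation is $[A,B]=0$, which is precisely the stated hypothesis. The $\lambda^1$ and $\lambda^0$ equations, read in the $(\bPi,\M)$, $(\bOm,\bGam)$, $(\bW,\V)$, $(\bM,\bN)$ vector components via the $\mathbb{R}^3\times\mathbb{R}^3$ representation of ad given in \eqref{ad-op-se3}, must be shown to coincide respectively with the Euler--Poincar\'e equation and the compatibility equation in the commutator form \eqref{GSeqns-se3a}. This is immediate after invoking ${\rm ad}^*=-\,{\rm ad}$ from \eqref{adstar-def-se3}, and it is where the bi-invariance of the cross pairing plays its essential role: without it, these two lines of \eqref{constraint-rels} would not reproduce the EP/compatibility pair but would instead contain spurious signs.

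The remaining $\lambda^3$ and $\lambda^2$ constraints,
\begin{align*}
{\rm ad}_A \Gamma &= {\rm ad}_B\, \Pi,\\
{\rm ad}_A \Omega &= {\rm ad}_B\, \Xi - {\rm ad}_\Pi \Gamma,
\end{align*}
are interpreted as algebraic relations that determine the diagnostic variables $\Gamma=(\bM,\bN)$ and $\Omega=(\bW,\V)$ in terms of the prognostic variables $\Pi=(\bPi,\M)$ and $\Xi=(\bOm,\bGam)$, in direct analogy with \eqref{lambda3-ad}--\eqref{lambda2-ad}. I would write each of these as a pair of $\mathbb{R}^3$ equations using \eqref{ad-op-se3}, verifying that the hypothesis $[A,B]=0$ guarantees consistency with the $\lambda^4$ relation.

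The main obstacle, as already flagged in the remarks after \eqref{lambda3-ad}, is that ${\rm ad}_A$ is not an invertible operator on $\mathfrak{se}(3)$; its kernel is nontrivial, so the solution of the $\lambda^3$ and $\lambda^2$ relations for $\Gamma$ and $\Omega$ is only defined up to scalar multiples of elements of $\ker({\rm ad}_A)$. Consequently, the proof concludes by asserting existence of $(\Gamma,\Omega)$ satisfying these constraints whenever a consistent choice of kernel elements is made; the actual determination of these scalars is not part of the ZCR statement itself but is deferred to the construction of explicit integrable Hamiltonians for the system, exactly as was done in the $SO(3)$ and $SO(4)$ examples \eqref{Hamilt rel}--\eqref{Hamilt rel 2} and \eqref{var-der}.
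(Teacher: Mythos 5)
Your proposal follows essentially the same route as the paper's proof: substitute the quadratic ansatz into the ZCR, equate coefficients of $\lambda^{0}$ through $\lambda^{4}$, identify the $\lambda^{1}$ and $\lambda^{0}$ equations with the Euler--Poincar\'e/compatibility pair \eqref{GSeqns-se3a} via ${\rm ad}^*=-\,{\rm ad}$ from \eqref{adstar-def-se3}, and treat the $\lambda^{3}$, $\lambda^{2}$ relations as diagnostic constraints determined only up to elements of $\ker({\rm ad}_A)$. The only substantive difference is that the paper goes one step further and exhibits an explicit resolution by taking $A=B=(a_1,a_2)$, writing the diagnostic variables as $(\bW,\V)=(\bOm,\bGam)+\beta(\bPi,\M)+\gamma(a_1,a_2)$ and $(\bM,\bN)=(\bPi,\M)+\beta(a_1,a_2)$, and then fixing the kernel scalars $\beta,\gamma$ from a concrete Hamiltonian, whereas you defer that construction.
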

\begin{proof}

 One may substitute these expressions for $L$ and $N$
into the ZCR in \eqref{ZCR-so4}, then collect coefficients of
powers of $\lambda$ as done earlier for $\mathfrak{so}(4)$ to find
the following system, cf. equation \eqref{constraint-rels-SO3},

\begin{align}
\begin{split}
&\lambda^4:  [A,B]  = 0
\\
&\lambda^3:{\rm ad}_{A} \Gamma ={\rm ad}_B \Pi
\\
&\lambda^2: {\rm ad}_{A}\Omega =  {\rm ad}_B \Xi - {\rm ad}_\Pi
\Gamma
\\
&\lambda:
\partial_t \Pi - \partial_s \Gamma
=
- {\rm ad}_\Omega\Pi + {\rm ad}_\Xi\Gamma
\\
&\lambda^0:
 \partial_t \Xi -  \partial_s \Omega
  =
  - {\rm ad}_\Omega\Xi
\end{split}
\label{ZCR-extratermSE3}
\end{align}
The last two equations recover the system \eqref{GSeqns-se3a} and
the first three equations have been solved using the same method
as for the general system \eqref{constraint-rels}.

Again, we can follow the $SO(3)$ example taking $A=B=(a_1,a_2).$
The diagnostic relations are similar to \eqref{constraint-rels}:
\begin{equation} \label{diagn rel se3}
\begin{split}
(\bW,\V)=&(\bOm,\bGam)+\beta(\bPi,\M)+\gamma(a_1,a_2) , \\
(\bM, \bN)=&(\bPi, \M)+\beta(a_1,a_2),
\end{split}
\end{equation} where $\beta, \gamma$ are yet undetermined scalars. Then we can write a Hamiltonian,
analogous to (\ref{Hamilt so3 general})
\begin{equation}
h=\int\left(\bPi\cdot\bOm+ \M\cdot \bGam +(\mu r + \nu)\left(
\frac{\bPi^2 +\M^2}{2}+a_1\cdot\bOm + a_2 \cdot \bGam
\right)+f(r)\right) ds,
\end{equation}
with $\mu, \nu$ arbitrary constants, $f$ - an arbitrary scalar
function of one argument and $$r=a_1\cdot\bPi + a_2 \cdot \M.$$ By
matching the variational derivatives of $h$ given in
(\ref{LP-Ham-struct-SMK-se3}) to the diagnostic relations
(\ref{diagn rel se3}),  we obtain the unknown scalars:
\begin{equation}
\begin{split}
\beta =&\mu (a_1\cdot\bPi + a_2 \cdot \M) + \nu = \mu r + \nu, \\
\gamma =&\mu\left( \frac{\bPi^2 +\M^2}{2}+a_1\cdot\bOm + a_2 \cdot
\bGam \right)+ f'(r).
\end{split}
\end{equation}
\end{proof}

\begin{remark}\rm 
The Hamiltonian used in the SMK
construction has the form $$ h=\int \left(\M^2 + \bPi\cdot {\bf J}
\bPi + \psi(\bGam, \bOm) \right) ds, $$ where ${\bf J}$ is a
symmetric operator and $\psi$ is a scalar function of its
arguments. The variational derivatives of such a Hamiltonian (cf.
(\ref{LP-Ham-struct-SMK-se3}) ) do not satisfy the diagnostic
relations and thus the SMK model is unlikely to be integrable.
\end{remark}


\subsection{Reduction to $SE(2)$ motion of planar curves}
The special Euclidean group of the plane $SE(2)\simeq SO(2)\,\circledS\, \mathbb{R}^2$ acts on a point in the plane defined by the vector $q=(q_1,\,q_2)^T\in\mathbb{R}^2$, as follows
\[
(R_\theta(t),v(t))(q)
=
\left(
\begin{array}{cc}
R_\theta(t) & v(t) \\
0 & 1
\end{array}
\right)
\left[
\begin{array}{c}
q \\
1
\end{array}
\right]
=
\left[
\begin{array}{c}
R_\theta(t) q + v(t) \\
1
\end{array}
\right]
,
\]
where $v=(v_1,\,v_2)^T\in \mathbb{R}^2$ is a vector in the plane and $R_\theta$ is the $2\times2$ matrix for rotations of vectors in the plane by angle $\theta$ about the normal to the plane $\hat{z}$,
\[
R_\theta
=
\left(
\begin{array}{cc}
\cos\theta & -\sin\theta \\
\sin\theta & \cos\theta
\end{array}
\right)
.
\]
The infinitesimal action is found by taking $\frac{d}{dt}|_{t=0}$ of this action, which yields
\[
\frac{d}{dt}\Big|_{t=0}(R_\theta(t),v(t))(q)
=
\left(
\begin{array}{cc}
-\Omega\hat{z}\times  & V \\
0 & 0
\end{array}
\right)
\left[
\begin{array}{c}
q \\
1
\end{array}
\right]
=
\left[
\begin{array}{c}
-\Omega\hat{z}\times q + V \\
1
\end{array}
\right]
,
\]
where $\Omega=\dot{\theta}(0)$ and $V=\dot{v}(0)$ are elements of $\mathfrak{se}(2)$.

In vector notation, the $\mathfrak{se}(2)$ ad-action  is
\begin{eqnarray*}
{\rm ad}_{(\Omega\,,\,V)}({\tilde{\Omega}}\,,\,{\tilde{V}})
&=&
[(\Omega, V)\,,\,(\tilde{\Omega},\tilde{V})\,]
\\&=&
\Big(
[\Omega\,,\,{\tilde{\Omega}}\,]\,,\,\Omega{\tilde{V}} - {\tilde{\Omega}}V
\Big)
\\&=&
\Big(
0
\,,\,
-\Omega\hat{z}\times\tilde{V} + \tilde{\Omega}\hat{z}\times V
\Big)
\,.
\end{eqnarray*}
The pairing between the Lie algebra $\mathfrak{se}(2)$ and its dual $\mathfrak{se}(2)^*$ is given by the cross pairing of vectors in $\mathbb{R}^3$,
\[
\Big<(\Pi,\mathcal{M})\,,\,(\Omega,V)  \Big>
=
\Pi\cdot V  + \mathcal{M}\cdot \Omega
\,.
\]
For $\mathfrak{g}=\mathfrak{se}(2)$, the vectors $\{\Pi,{\cal W},\Omega,\delta h/\delta \Omega= M\}$ are all normal to the plane, and the vectors $\{{\cal M},{\cal V},\Gamma,\delta h/\delta \Gamma=N\}$ are all in the plane.
Consequently, the Lie--Poisson Hamiltonian form of the G-Strand equations for $\mathfrak{g}=\mathfrak{se}(3)$ in \eqref{LP-Ham-struct-SMK-se3} reduces to the following simpler form,
\begin{equation} \label{LP-Ham-struct-vec-se2}
\frac{\partial}{\partial t}
\left[
\begin{array}{c}
    \Pi
    \\
    \mathcal{M}
    \\
    \Gamma
    \\
    \Omega
    \end{array}
\right]
=
\left[
\begin{array}{cccc}
   0
   &
   \mathcal{M}\times
   &
   \Gamma\times
   &
   \partial_s
   \\
   \mathcal{M}\times
   &
   0
   &
   (\partial_s + \Omega\times)
   &
   0
   \\
   \Gamma\times
   &
   (\partial_s + \Omega\times)
   &
   0
   &
   0
   \\
   \partial_s
   &
   0
   &
   0
   &
   0
    \end{array}
\right]
\left[
\begin{array}{c}
   \delta h/\delta\Pi=W \\
   \delta h/\delta\mathcal{M}= V \\
    \delta h/\delta \Gamma = N \\
    \delta h/\delta\Omega = M
    \end{array}
\right]
\end{equation}
The $SE(2)$ G-Strand equations for the motions of planar curves may then be written out separately as, cf. \eqref{SE3Gstrand-eqns},
\begin{align}
\begin{split}
\displaystyle
\prt_t \Pi - \prt_s M
& = \mathcal{M}\times V + \Gamma\times N
,
\\
 \displaystyle
\prt_t \mathcal{M} - \prt_s N
& = \mathcal{M}\times W + \Omega\times N
,
\\
\prt_t \Gamma - \prt_s V
& = \Gamma \times W + \Omega\times V
,\\
\partial_t\Omega - \prt_s W
& = 0.
\end{split}
\label{planarcurve-eqns}
\end{align}

Of course, the ZCR for these equations is a subset of that in \eqref{LNse3-defs}. The ZCR for two-dimensional motions generated by $\mathfrak{se}(2)$ contains most of the features of the three-dimensional motions generated by $\mathfrak{se}(3)$, but there are only six variables instead of twelve. Therefore, it might be interesting to build intuition for the solutions of G-Strand motions using the ZCR in the case of the $SE(2)$ G-Strands.

\section{Conclusions}\label{Conclus-sec}

We have shown that the G-Strand equations are a rich source of
integrable systems of physical significance. For example, we have
shown that the SE(3) G-Strand equations and the
Simo-Marsden-Krishnaprasad (SMK) equations of exact rod theory in
the convective  representation both describe the motion of a curve
in space under the action of SE(3). The motion of space curves is
a classical problem that has been deeply studied from the
viewpoint of integrable systems \cite{DoSa1994}. Exploring that
connection with the SE(3) G-Strand equations could be a promising
direction for future research, particularly for the integrable
cases.  Of course, the \emph{G-Brane} equations for the action of
SE(3) on smooth embeddings $Emb(M,\mathbb{R}^3)$ for a smooth
manifold $M\subset\mathbb{R}^3$ would also be promising and the
mathematical formulation for G-Branes has recently been discussed
in \cite{FGB2012}. Applications of similar ideas for registration
of planar curves have recently been successful \cite{HoNoVa2013}.
The further application of these ideas for matching space curves
seems like a natural generalization.

{\bf Travelling waves.} Interestingly, the travelling waves for
ZCR systems have contributions from both $s$- and $t$-
derivatives. Consequently,  they do not reduce to the original
integrable ODEs that inspired the present work. Instead the wave
speed introduces a bifurcation parameter, $c$, and the travelling
wave solution becomes singular when $c^2=1$, which is the
characteristic speed of the full hyperbolic PDE system. The
behaviour of the travelling wave solutions of these ZCR systems
will be studied elsewhere.

{\bf Soliton solutions.} The ZCR of the matrix G-Strands puts them
into the realm of integrable Hamiltonian systems. An interesting
direction would be to pursue soliton solutions of the matrix
G-Strand equations derived here. The integrable matrix G-Strands
possess a ZCR arising as commutation of matrix operators in the
depending on a spectral parameter $\lambda$. Finding their soliton
solutions requires solving an inverse spectral problem. The
corresponding inverse problem may be formulated as a nonlinear
Riemann--Hilbert problem on a given contour of the complex
$\lambda$ plane. For more information, see
\cite{FoItsKaNo2006,Its2003,Iv2004,ZaMaNoPi1984} and references
therein. This method for obtaining the soliton solutions of the
ZCR equations derived from G-Strand equations should be a
promising direction for future investigations.

{\bf Wave behaviour of liquid crystals and other complex fluids.} The effects that have been studied experimentally in the field of complex fluids have been mainly dominated by dissipation. However, the equations for the dynamics of these systems without dissipation admit nonlinear waves. These waves could also be experimentally interesting in studies of complex fluids, which we have found here can be formulated naturally as \EP systems \cite{Ho2002,GBRa2009,EGBHPR2010}. This will be the topic of future research.

{\bf Harmonic maps.}
Incidentally, all of the ZCR examples here can be extended to harmonic maps, by replacing $(s,t)\in\mathbb{R}\times\mathbb{R}$ by $(z,\bar{z}\in\mathbb{C})$ and following Uhlenbeck's approach, \cite{Uh1989}.

\subsection*{Acknowledgements}
We are grateful to D. C. P. Ellis, F. Gay-Balmaz, J. E. Marsden,
V. Putkaradze, T. S. Ratiu and C. Tronci for extensive discussions
of this material. We are particularly grateful to T. S. Ratiu for
suggesting the cross pairing \eqref{cross-pairing} for the $SE(3)$
case. Work by DDH was partially supported by Advanced Grant 267382
FCCA from the European Research Council. DDH is also grateful for
hospitality at the Isaac Newton Institute for Mathematical
Sciences, where this paper was finished.

\end{document}